\newtheorem{theorem}{Theorem}
\newtheorem{acknowledgement}[theorem]{Acknowledgement}
\newtheorem{definition}[theorem]{Definition}
\newtheorem{lemma}[theorem]{Lemma}
\newenvironment{proof}[1][Proof]{\textbf{#1.} }{\ \rule{0.5em}{0.5em}}
\begin{document}

\title{A Trotter-Kato Theorem for Quantum Markov Limits}
\author{Luc Bouten\footnote{\texttt{luc\_bouten@hotmail.com}, BTA VOF, Slootsekuilen 9, 5986 PE, Beringe, The Netherlands},
Rolf Gohm\footnote{\texttt{rog@aber.ac.uk}
Dept. for Mathematics and Physics, Aberystwyth University, SY23 3BZ, Wales, United Kingdom}, 
John Gough\footnote{\texttt{jug@aber.ac.uk}
Dept. for Mathematics and Physics, Aberystwyth University, SY23 3BZ, Wales, United Kingdom} and Hendra Nurdin\footnote{
 \texttt{h.nurdin@unsw.edu.au}
 School of Electrical Engineering and Telecommunications, UNSW Australia, Sydney NSW 2052, Australia.}
 }
 
\maketitle

\begin{abstract}
Using the Trotter-Kato theorem we prove the convergence of the unitary dynamics generated by
an increasingly singular Hamiltonian in the case of a single field coupling. The limit dynamics is 
a quantum stochastic evolution of Hudson-Parthasarathy type, and we establish in the process a 
graph limit convergence of the pre-limit Hamiltonian operators to the Chebotarev-Gregoratti-von Waldenfels
Hamiltonian generating the quantum It\={o} evolution.
\end{abstract}

\section{Introduction}

In the situation of regular perturbation theory, we typically have a
Hamiltonian interaction of the form $H=H_{0}+H_{\text{int}}$ with associated
strongly continuous one-parameter unitary groups $U_{0}\left( t\right)
=e^{-itH_{0}}$ (the free evolution) and $U\left( t\right) =e^{-itH}$ (the
perturbed evolution), then we transform to the Dirac interaction picture by
means of the unitary family $V\left( t\right) =U_{0}\left( -t\right) U\left(
t\right) $. Although $V\left( \cdot \right) $ is strongly continuous, it
does not form a one-parameter group but instead yields what is known as a
left $U_{0}$-cocycle: 
\begin{equation}
V\left( t+s\right) =U_{0}\left( s\right) ^{\dag }V\left( t\right)
U_{0}\left( s\right) V\left( s\right) .  \label{eg:cocyle}
\end{equation}
One obtains the interaction picture dynamical equation 
\begin{equation}
i\dfrac{d}{dt}V\left( t\right) =\Upsilon \left( t\right) V\left( t\right) ,
\label{eq:Dirac}
\end{equation}
where $\Upsilon \left( t\right) =U_{0}\left( t\right) ^{\dag }H_{\text{int}
}U_{0}\left( t\right) $.

More generally, we may have a pair of unitary groups $U\left( \cdot \right) $
and $U_{0}\left( \cdot \right) $ with Stone generators $H$ and $H_{0}$
respectively, but where the intersection of the domains of the generators are not dense. This is
the situation of a singular perturbation. In this case we cannot expect the
Dirac picture dynamical equation (\ref{eq:Dirac}) to be anything but formal
since the difference $H_{\text{int}}=H-H_{0}$ is not densely defined.

Remarkably, the steps above can be reversed even for the situation of singular perturbations. If we assume at the outset a fixed free dynamics $U_{0}\left( \cdot \right) $, with Stone generator $H_{0}$, and a strongly continuous unitary left $U_{0}$-cocycle $V\left( \cdot \right) $, then $U\left( t\right) =U_{0}\left( t\right) V\left( t\right) $ will then form a
strongly continuous one-parameter unitary group with Stone generator $H$. In practice however the problem of reconstructing $H$ from the prescribed $H_{0} $ and $V\left( \cdot \right) $ will be difficult.

In the situation of quantum stochastic evolutions introduced by Hudson and Parthasarathy \cite{HP84}, we have a strongly continuous adapted process $V\left( \cdot \right) $ satisfying a quantum stochastic differential equation (including Wiener and Poisson noise as special commutative cases) in place of (\ref{eq:Dirac}), and the solution constitutes a cocycle with
respect to the time-shift maps $U_{0}\equiv \Theta $ (see below). Nevertheless, $V\left( \cdot \right) $ arises as the Dirac picture evolution for a singular perturbation of a unitary $U\left( \cdot \right) $ with some generator $H$ with respect to the time-shift: it was a long standing problem to find an explicit form for $H$ which was finally resolved by Gregoratti \cite{Gre01}, see also \cite{WvW}.

The purpose of this paper is to approximate the singular perturbation arising in quantum stochastic evolution models by a sequence of regular perturbation models. That is, to construct a sequence of Hamiltonians $H^{\left( k\right) }=H_{0}+H_{\text{int}}^{\left( k\right) }$ yielding a regular perturbation $V^{\left( k\right) }\left( \cdot \right) $ converging
to a singular perturbation $V\left( \cdot \right) $ in some controlled way. We exploit the fact that the limit Hamiltonian is now known through the work of Chebotarev \cite{Che97} and Gregoratti \cite{Gre01}. The strategy is to employ the Trotter-Kato theorem which guarantees strong uniform convergence of the unitaries once graph convergence of the Hamiltonians is established.

\subsection{Quantum Stochastic Evolutions}

The seminal work of Hudson and Parthasarathy \cite{HP84} on quantum stochastic evolutions lead to explicit constructions of unitary adapted quantum stochastic processes $V$ describing the the open dynamical evolution of a system with a singular Boson field environment. We fix the system Hilbert space $\mathfrak{h}$ and model the environment as having $n$ channels so that the underlying Fock space is $\mathfrak{F}=\Gamma \left( \mathbb{C}^n \otimes L^{2} ( \mathbb{R}) \right) $. Here $\Gamma \left( \mathfrak{H}\right) $ denotes the symmetric (boson) Fock space over a one-particle space $\mathfrak{H}$: we set the inner product as $\langle \Psi |\Phi \rangle =\sum_{m=0}^{\infty }\frac{1}{m!}\langle \Psi _{m}|\Phi _{m}\rangle $ and take the exponential vectors to be defined as  ($\otimes_s$ denoting a symmetric tensor product)
\begin{equation*}
e\left( f\right) =\left( 1,f,f\otimes_s f,f\otimes_s f\otimes_s f,\cdots \right)
\end{equation*}
with test function $f\in \mathfrak{H}$. Here the one particle space is $L^{2}(\mathbb{R})$, the space of complex-valued square-integrable functions on $\mathbb{R}$. We define the operators  
\begin{eqnarray*}
\Lambda ^{00}\left( t\right) &\triangleq &t, \\
\Lambda ^{10}\left( t\right) &=&A^{\dag }\left( t\right) \triangleq
a^{\dag }\left( 1_{\left[ 0,t\right] }\right) , \\
\Lambda ^{01}\left( t\right) &=&A\left( t\right) \triangleq a\left(
 1_{\left[ 0,t\right] }\right) , \\
\Lambda ^{11}\left( t\right) &= & \Lambda (t) \triangleq d\Gamma \left( \chi _{\left[ 0,t\right] }\right) ,
\end{eqnarray*}
where $1_{\left[ 0,t\right] }$ is the characteristic function of the interval $\left[ 0,t\right] $ and $\chi _{\left[ 0,t\right] }$ is the operator on $L^{2}(\mathbb{R}) $ corresponding to multiplication by $1_{\left[ 0,t\right] }$. Hudson and Parthasarathy \cite{HP84} have developed a quantum It\={o} calculus where the basic objects are integrals of adapted processes with respect to the fundamental
processes $\Lambda ^{\alpha \beta }$. The quantum It\={o} table is then 
\begin{equation*}
d\Lambda ^{\alpha \beta }\left( t\right) \,d\Lambda ^{\mu \nu }\left(
t\right) =\hat{\delta}_{\beta \mu }\,d\Lambda ^{\alpha \nu }\left( t\right)
\end{equation*}
where $\hat{\delta}_{\alpha \beta }$ is the Evans-Hudson delta defined to equal unity if $\alpha =\beta =1 $ and zero otherwise. This may be written as 
\begin{equation*}
\begin{tabular}{l|llll}
$\times $ & $dA$ & $d\Lambda $ & $dA^{\dag }$ & $dt$ \\ \hline
$dA$ & 0 & $dA$ & $dt$ & 0 \\ 
$d\Lambda $ & 0 & $d\Lambda $ & $dA$ & 
0 \\ 
$dA^{\dag }$ & 0 & 0 & 0 & 0 \\ 
$dt$ & 0 & 0 & 0 & 0
\end{tabular}
.
\end{equation*}
In particular, we have the following theorem \cite{HP84}.

\begin{theorem}
There exists a unique solution $V\left( \cdot ,\cdot \right) $ to the quantum stochastic differential equation 
\begin{equation}
V\left( t,s\right) =I+\int_{s}^{t}dG\left( \tau \right) \,V\left( \tau
,s\right)  \label{eq: diff eq}
\end{equation}
$\left( t\geq s\geq 0\right) $ where 
\begin{equation*}
dG\left( t\right) =G_{\alpha \beta }\otimes d\Lambda ^{\alpha \beta }\left(
t\right)
\end{equation*}
with $G_{\alpha \beta }\in \mathfrak{B}\left( \mathfrak{h}\right) $. (We adopt the convention that we sum repeated Greek indices over the range $0,1$.)
\end{theorem}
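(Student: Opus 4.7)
The plan is to construct $V(t,s)$ via Picard iteration on an exponential vector domain and then extend by density. Fix the dense subspace $\mathcal{D} \subset \mathfrak{h}\otimes \mathfrak{F}$ consisting of finite linear combinations of vectors of the form $u\otimes e(f)$ with $u\in\mathfrak{h}$ and $f\in \mathbb{C}^n\otimes L^2(\mathbb{R})$ locally bounded. Define the Picard sequence by $V_0(t,s) = I$ and
\begin{equation*}
V_{n+1}(t,s) = I + \int_s^t dG(\tau)\, V_n(\tau,s).
\end{equation*}
Each $V_n(t,s)$ is a well-defined adapted process on $\mathcal{D}$, since the $G_{\alpha\beta}$ are bounded on $\mathfrak{h}$ and the action of $d\Lambda^{\alpha\beta}$ on exponential vectors is explicit.

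The central tool is the Hudson--Parthasarathy first-order (fundamental) estimate for quantum stochastic integrals: for any adapted process $X$ for which $X(\tau)\,u\otimes e(f)$ is strongly measurable and square-integrable in $\tau$,
\begin{equation*}
\bigl\| \int_s^t G_{\alpha\beta} \otimes d\Lambda^{\alpha\beta}(\tau)\, X(\tau)\, u\otimes e(f) \bigr\|^2 \leq C(f,t)\, \|e(f)\|^2 \int_s^t \| X(\tau)\, u\otimes e(f) \|^2 \, d\tau,
\end{equation*}
where the constant $C(f,t)$ depends only on $\max_{\alpha,\beta}\|G_{\alpha\beta}\|$ and on $\|f\|$ together with $\int_0^t |f(\tau)|^2\, d\tau$. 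This follows from the quantum It\^o table together with the CCR-based inner product computation on exponential vectors. Iterating this estimate on the differences $V_{n+1}-V_n$ produces a factorial bound
\begin{equation*}
\| [V_{n+1}(t,s) - V_n(t,s)]\, u\otimes e(f) \|^2 \leq \frac{[C(f,t)(t-s)]^n}{n!}\, M(u,f,t),
\end{equation*}
so the iterates converge absolutely and uniformly in $s\leq t \leq T$ on $\mathcal{D}$. The limit $V(t,s)$ inherits adaptedness and, by dominated convergence in the integrand, satisfies (\ref{eq: diff eq}).

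For uniqueness, if $V'$ is another solution then $W = V - V'$ obeys $W(t,s)\,u\otimes e(f) = \int_s^t dG(\tau)\, W(\tau,s)\,u\otimes e(f)$ on $\mathcal{D}$; iterating this $n$ times and invoking the same fundamental estimate gives $\|W(t,s)\,u\otimes e(f)\|^2 \leq [C(f,t)(t-s)]^n/n!\cdot \text{const}$ for every $n$, hence $W = 0$ on $\mathcal{D}$, and then everywhere by continuity.

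The main obstacle is the correct formulation and proof of the fundamental estimate, in particular the contribution of the gauge process $d\Lambda^{11}$ which acts as an unbounded operator on $\mathfrak{F}$. Here one uses the explicit action $\Lambda(t)\,e(f) = \int_0^t f(\tau)\,\partial_{f(\tau)}e(f)\, d\tau$ style computation together with the polynomial bound in the test function $f$ that emerges from pairing exponential vectors; this is what forces the test-function-dependent constant $C(f,t)$ rather than a uniform one, but it is enough to close the iteration and deliver both existence and uniqueness.
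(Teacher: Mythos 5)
The paper does not prove this theorem: it is recalled from Hudson and Parthasarathy \cite{HP84}, and your Picard-iteration argument is precisely a reconstruction of the proof in that source. The outline you give is sound and is the standard one: iteration on the exponential domain, the fundamental estimate with a test-function-dependent constant to absorb the unbounded gauge integrator, factorial decay of the iterates, and uniqueness by the same estimate. Two small imprecisions, neither of which breaks the argument: the extra factor $\|e(f)\|^2$ you placed on the right-hand side of the fundamental estimate is superfluous (the norm $\|X(\tau)\,u\otimes e(f)\|^2$ already carries the exponential vector), and the constant for the $d\Lambda^{11}$ and $dA$ terms actually enters as the weight $|f(\tau)|^2$ \emph{inside} the time integral rather than as a prefactor --- it is precisely your restriction to locally bounded $f$ that lets you pull $\sup_{\tau\le T}|f(\tau)|^2$ out and recover a constant $C(f,T)$, so that restriction is not optional but essential. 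Stating this explicitly would make the closure of the iteration cleaner.
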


In particular, set $V\left( t\right) =V\left( t,0\right) $ then we have the quantum stochastic differential equation $dV\left( t\right) =dG\left( t\right) \,V\left( t\right) $ which replaces the regular Dirac picture dynamical equation (\ref{eq:Dirac}). 

We refer to $\mathbf{G=}\left[ G_{\alpha \beta }\right] \in \mathfrak{B} \left( \mathfrak{h} \oplus \mathfrak{h}  \right) $, as the \textit{coefficient matrix}, and $V$ as the left process generated by $\mathbf{G}$. The conditions for the process $V$ to be unitary are that $\mathbf{G}$\ takes the form, with respect to the decomposition $\mathfrak{h} \oplus \mathfrak{h} $, 
\begin{equation}
\mathbf{G}=\left[ 
\begin{array}{cc}
-\dfrac{1}{2}\mathsf{L}^{\dag }\mathsf{L}-i\mathsf{H} & -\mathsf{L}^{\dag } \mathsf{S} \\ 
\mathsf{L} & \mathsf{S}-I
\end{array}
\right]  \label{eq:G_unitary}
\end{equation}
where $\mathsf{S}\in \mathfrak{B}\left( \mathfrak{h}\right) $ is a unitary, $\mathsf{L}
\in \mathfrak{B}\left( \mathfrak{h}\right) $ and $\mathsf{H}\in \mathfrak{B}\left( \mathfrak{h}\right) $ is self-adjoint. We may write in more familiar notation \cite{HP84} 
\begin{equation*}
dG\left( t\right) =\left( -\dfrac{1}{2}\mathsf{L}^{\dag }\mathsf{L}-i \mathsf{H}\right) \otimes dt-\mathsf{L}^{\dag }\mathsf{S}\otimes
dA\left( t\right) +\mathsf{L}\otimes dA^{\dag }\left( t\right) +(\mathsf{S}-I)\otimes d\Lambda \left( t\right) .
\end{equation*}

We denote the shift map on $L^{2}\left( \mathbb{R} \right) $ by $\theta _{t}$, that is $\left( \theta _{t}\right) f(\cdot)=f\left( \cdot +t\right) $ and its second quantization as $\Theta _{t}=I\otimes \Gamma \left( \theta _{t}\right) $. It then turns out that $\Theta _{\tau }^{\dag }V\left( t,s\right) \Theta _{\tau }=V\left( t+\tau ,s+\tau \right) $ and so $V\left( t\right) =V\left( 0,t\right) $ is a left unitary $\Theta $-cocycle and that there must exist a self-adjoint operator $H$ such that  
\begin{equation*}
\Theta _{t}V\left( t\right) \equiv e^{-iHt}
\end{equation*}
for $t\geq 0$. (For $t<0$ one has $V\left( -t\right) ^{\dag }\Theta
_{-t}\equiv e^{-iHt}$.) Here $H$ will be a singular perturbation of generator of the shift, and its characterization was given by Gregoratti \cite{Gre01}. See also \cite{QG}.

\subsection{Physical Motivation}

As a precursor to and motivation for further approximations, we fix on a simple model of a quantum mechanical system coupled to a boson field reservoir $R$. In the Markov approximation we assume that the auto-correlation time of the field processes vanishes in the limit: this includes weak coupling (van Hove) and low density limits. The Hilbert space for the field is the Fock space $\mathcal{F}_{R}=\Gamma \left( \mathcal{H}_{R}^{1}\right) $ with one-particle space $\mathcal{H}_{R}^{1}=L^{2}\left(  \mathbb{R}\right) $ taken as the momentum space. (For convenience we consider a one-dimensional situation because this is the setting studied in this paper but of course $\mathbb{R}^3$ is particularly relevant physically.)
It is convenient to write annihilation operators formally as $A_{R}\left( g\right) =\int_{\mathbb{R}}g\left( p\right) ^{\ast }a_{p}dp$ where $\left[ a_{p},a_{p^{\prime }}^{\dag }\right] =\delta \left( p-p^{\prime }\right) $.

In particular, let us fix a function $g\in L^{2}\left(  \mathbb{R}\right) $, and set 
\begin{equation*}
a\left( t,k\right) =\sqrt{k}\int e^{-i\omega \left( p\right)
tk}g\left( p\right) ^{\ast }a_{p}\,dp
\end{equation*}
where $\omega =\omega \left( p\right) $ is a given function (determining the dispersion relation for the free quanta) and $k$ is a dimensionless parameter rescaling time. We have the commutation relations 
\begin{equation*}
\left[ a\left( s,k\right) ,a\left( t,k\right) ^{\dag }\right]
=k\, \rho \left( k(t-s)\right)
\end{equation*}
where 
\begin{equation*}
\rho \left( \tau \right) \equiv \int  \vert g \left( p\right) \vert^2 e^{i\omega \left( p\right) \tau }\,dp.
\end{equation*}
The limit $k\rightarrow \infty $ leads to singular commutation relations, and it is convenient to introduce smeared fields 
\begin{equation*}
A\left( \varphi ,k\right) =\int \varphi \left( t\right) ^{\ast}a \left( t,k\right) \,dt
\end{equation*}
in which case we have the two-point function (and define an operator $C_k$ by)
\begin{equation*}
\left[ A\left( \varphi ,k\right) ,A\left( \psi ,k\right) ^{\dag }\right]
=\int dtdt^{\prime }\, \varphi \left( t\right) ^{\ast }k\rho \left( k\left( t-t^{\prime }\right) \right) \psi \left( t^{\prime
}\right) \equiv \langle \varphi |C_{k}\psi \rangle
\end{equation*}
For $\rho $ integrable, we expect 
\begin{equation*}
\lim_{k\rightarrow \infty }\left[ A\left( \varphi ,k\right) ,A\left( \psi
,k\right) ^{\dag }\right] = \gamma \int dt\, \varphi \left( t\right)
^{\ast } \psi \left( t\right) 
\end{equation*}
where $\gamma =\int_{-\infty }^{\infty }\rho \left( \tau \right) d\tau =2\pi \int \vert g \vert^2 \left( p\right) \delta \left( \omega \left( p\right) \right) \,dp \ge 0$. When $\gamma = 1$, the $A\left( \varphi ,k\right) $ are smeared versions of the annihilators on $\Gamma \left( L^{2}\left( \mathbb{R}\right) \right) $.

The limit $k \uparrow \infty$ corresponds to the smeared field becoming singular and this leads to a quantum Markovian approximation.
The formulation of such models was first given and treated in a systematic way by Accardi, Frigero and Lu who developed a
set of powerful quantum functional central limit theorems including the weak coupling \cite{AFL1} and low density \cite{AFL2} regimes.
Theorem \ref{thm:JG} is an extension of these which includes both quantum diffusion and jump terms \cite{Go05,Go06}.

\begin{theorem}
\label{thm:JG}
Let $\left( \mathsf{E}_{\alpha \beta }\right) $ be bounded operators on a
fixed separable Hilbert space $\mathfrak{h}$ labeled by $\alpha ,\beta \in
\left\{ 0,1 \right\} $ with $\mathsf{E}_{\alpha \beta }^{\dag }=
\mathsf{E}_{\beta \alpha }$ and $ \| E_{11} \| < 2$. Let 
\begin{equation*}
\Upsilon \left( t,k\right) =\mathsf{E}_{11}\otimes a\left(
t,k\right) ^{\dag }a\left( t,k\right) +\mathsf{E}_{10}\otimes
a\left( t,k\right) ^{\dag }+\mathsf{E}_{01}\otimes a\left(
t,k\right) +\mathsf{E}_{00}\otimes I
\end{equation*}
and 
\begin{equation*}
e\left( \varphi ,k\right) =\exp \left\{ A\left( \varphi ,k\right) -A\left(
\varphi ,k\right) ^{\dag }\right\} \Omega _{R}
\end{equation*}
with $\Omega _{R}$ the Fock vacuum of $\mathcal{F}_{R}$. The solution $V (t,k)$ to the equation 
\begin{equation*}
\frac{d}{dt}V\left( t,k\right) =-i\Upsilon \left( t,k\right) \,V\left(
t,k\right) ,\quad V\left( 0,k\right) =I,
\end{equation*}
exists and we have the limit 
\begin{equation*}
\lim_{k\rightarrow \infty }\langle u_{1}\otimes e\left( \varphi ,k\right)
|V\left( t,k\right) |u_{2}\otimes e\left( \psi ,k\right) \rangle =\langle
u_{1}\otimes e\left( \varphi \right) |V\left( t\right) |u_{2}\otimes e\left(
\psi \right) \rangle
\end{equation*}
for all $u_{1},u_{2}\in \mathfrak{h}$ and $\varphi ,\psi \in L^{2}\left( \mathbb{R}\right) $, where $V$ is a unitary
adapted process on $\mathfrak{h}\otimes \Gamma \left( \mathbb{C}^{n}\otimes
L^{2}\left( \mathbb{R}\right) \right) $ with coefficient matrix $\mathbf{G%
}$  given by 
\begin{equation}
\mathbf{G}=-i\mathbf{E}-i\frac{1}{2} \mathbf{G}\left[ 
\begin{array}{cc}
0 & 0 \\ 
0 & 1
\end{array}
\right] \mathbf{E}  \label{eq: Ito to Stratonovich}
\end{equation}
where  we assume $ \int_{-\infty }^{0}\rho \left( \tau \right) d\tau = \frac{1}{2}.$
\end{theorem}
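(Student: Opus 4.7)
My plan is to expand $V(t,k)$ as a time-ordered Dyson series
\begin{equation*}
V(t,k) = \sum_{n\geq 0}(-i)^n \int_{0<\tau_1<\cdots<\tau_n<t} \Upsilon(\tau_n,k)\cdots\Upsilon(\tau_1,k)\,d\tau_1\cdots d\tau_n,
\end{equation*}
and evaluate the matrix element against the coherent vectors $e(\varphi,k)$ and $e(\psi,k)$ term by term. Each vertex $\Upsilon(\tau_j,k)$ is chosen from one of four tensor summands labelled by $(\alpha_j,\beta_j)\in\{0,1\}^2$, contributing the $\mathfrak{h}$-operator $\mathsf{E}_{\alpha_j\beta_j}$ together with a monomial in $a(\tau_j,k),\,a(\tau_j,k)^{\dag}$. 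Since $e(\varphi,k)$ is a Weyl-displaced vacuum, the bosonic Wick theorem expresses the vacuum expectation as a sum over complete pairings of the field indices: each internal pairing contributes a factor $k\rho(k(\tau_i-\tau_j))$, and each pairing with a coherent label contributes $\int\psi(s)\,k\rho(k(s-\tau_j))\,ds$ (or its $\varphi$-conjugate), the whole multiplied by the ordered string of $\mathsf{E}_{\alpha_j\beta_j}$'s.

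The limit $k\to\infty$ collapses the two-point kernel distributionally to $\gamma\delta=\delta$. Because $\int_{-\infty}^{0}\rho\,d\tau = \tfrac{1}{2}$, an internal pairing that is forced by the time-ordering to lie in $\tau_i>\tau_j$ yields only the half-weight $\tfrac{1}{2}$ at the collapsed diagonal, whereas pairings with the test functions $\varphi,\psi$ (integrated over all of $\mathbb{R}$) pick up the full weight $1$. After the limit, each $n$-fold time integral collapses onto an iterated Hudson--Parthasarathy integral against the fundamental processes $d\Lambda^{\alpha\beta}$, and the matrix element is recognised, order by order in $n$, as a chaos-expansion term of the solution of \eqref{eq: diff eq} for some coefficient matrix $\mathbf{G}$.

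The principal obstacle is the identification of $\mathbf{G}$, and this is where the self-contractions of the number vertex $\mathsf{E}_{11}\otimes a^{\dag}a$ become essential: a run of $m$ consecutive $(1,1)$-vertices at nearly coincident times contracts pairwise into a ``fused'' diagram, producing a geometric series $\sum_{m\geq 0}(-\tfrac{i}{2}\mathsf{E}_{11})^{m}$ that sums to the resolvent $(I+\tfrac{i}{2}\mathsf{E}_{11})^{-1}$; convergence here is precisely what the hypothesis $\|\mathsf{E}_{11}\|<2$ provides. Collecting the analogous self-energy chains along each of the four field channels produces the implicit equation \eqref{eq: Ito to Stratonovich} for $\mathbf{G}$, solved by
\begin{equation*}
\mathsf{S} = (I-\tfrac{i}{2}\mathsf{E}_{11})(I+\tfrac{i}{2}\mathsf{E}_{11})^{-1},\qquad \mathsf{L} = -\tfrac{i}{2}(\mathsf{S}+I)\mathsf{E}_{10},
\end{equation*}
with $\mathsf{H}$ determined self-adjointly from $\mathsf{E}_{00}$ and the above; the Cayley transform $\mathsf{S}$ of the self-adjoint $\tfrac{1}{2}\mathsf{E}_{11}$ is unitary, so $\mathbf{G}$ has the form \eqref{eq:G_unitary} and the limit process $V(t)$ is unitary. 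The same bound $\|\mathsf{E}_{11}\|<2$ supplies $k$-uniform control of the Wick-combinatorial sums, justifying exchange of the $k\to\infty$ limit with the Dyson summation and completing the proof.
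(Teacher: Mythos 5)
The paper does not actually prove Theorem~\ref{thm:JG}; immediately after the statement it cites~\cite{Go05} and simply remarks that the proof ``requires a development and a uniform estimation of the Dyson series expansion'', with $\|\mathsf{E}_{11}\|<2$ needed for summability. Your sketch reconstructs exactly that cited strategy --- Dyson expansion tested against coherent states, Wick pairings, the $\tfrac12$-weighted delta collapse of the two-point kernel enforced by time ordering, and geometric resummation of number-vertex self-contractions yielding the resolvent $(I+\tfrac{i}{2}\mathsf{E}_{11})^{-1}$ and hence $(\mathsf{S},\mathsf{L},\mathsf{H})$ as in~\eqref{eq: SLH 1D} --- so it is consistent with the paper's (referenced) proof, though the $k$-uniform estimates that justify interchanging the $k\to\infty$ limit with the Dyson summation, which are the technical core of~\cite{Go05}, are asserted rather than established.
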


The proof of the Theorem is given in \cite{Go05} and requires a development and a uniform estimation of the Dyson series expansion. Summability of the series requires
that $\| E_{11} \| < 2$.

The triple $(\mathsf{S},\mathsf{L},\mathsf{H})$ from (\ref{eq:G_unitary}) obtained through (\ref{eq: Ito to Stratonovich}) is 
\begin{eqnarray}
\mathsf{S} &=&\frac{I- \frac{i}{2} \mathsf{E}_{11}}{I+ \frac{i}{2} \mathsf{E}_{11}},\qquad 
\mathsf{L}=-\frac{i}{I+ \frac{i}{2}  \mathsf{E}_{11}} \mathsf{E} _{10},  \nonumber\\
\mathsf{H} &=&\mathsf{E}_{00}+\mathsf{E}_{01} \mathrm{Im} \left\{ \frac{1}{I+ \frac{i}{2} \mathsf{E}_{11}}\right\} \mathsf{E}_{10}.  \label{eq: SLH 1D}
\end{eqnarray}

Our objective is reappraise Theorem \ref{thm:JG}, where we will prove a related result by an alternative technique.
Using the Trotter-Kato theorem, we will establish a stronger mode of convergence (uniformly on compact intervals of time
and strongly in the Hilbert space) by means of a graph convergence of the Hamiltonians. The new approach has the advantage 
of been simpler and is likely to be more readily extended to other cases, for instance a continuum of input channels as 
originally treated in \cite{HP84}, which cannot be treated by the perturbative techniques used in the proof of Theorem \ref{thm:JG}.

\section{Trotter-Kato Theorems for Quantum \newline
Stochastic Limits}

Our main results will employ the Trotter-Kato theorem, which we recall next in a particularly convenient form. See \cite{Da}, Theorem 3.17, or \cite{RSI}, Chapter VIII.7. 

\begin{theorem}[\label{thm Trotter-Kato}Trotter-Kato]
Let $\mathcal{H}$ be a Hilbert space and let $U^{\left( k\right) }\left( \cdot \right) $ and $U\left( \cdot \right) $ be strongly continuous one-parameter groups of unitaries on $\mathcal{H}$ with Stone generators $H^{(k)}$ and $H$, respectively. Let $\mathcal{D}$ be a core for $H$. The following are equivalent

\begin{enumerate}
\item  For all $f\in \mathcal{D}$ there exist $f^{(k)}\in \mbox{Dom} (H^{(k)})$ such that 
\begin{equation*}
\lim_{k\rightarrow \infty }f^{(k)}=f,\qquad \lim_{k\rightarrow
\infty}H^{(k)}f^{(k)}=Hf.
\end{equation*}

\item  For all $0\leq T<\infty $ and all $f\in \mathcal{H}$ we have 
\begin{equation*}
\lim_{k\rightarrow \infty }\sup_{0\leq t\leq T}\left\| \left( U^{\left(
k\right) }\left( t\right) -U\left( t\right) \right) f\right\| =0.
\end{equation*}
\end{enumerate}
\end{theorem}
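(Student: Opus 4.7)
The strategy is to interpose the intermediate condition of strong resolvent convergence (SRC),
\[
(H^{(k)}-i)^{-1}\longrightarrow (H-i)^{-1}\quad\text{strongly on }\mathcal{H},
\]
and to prove $(1)\Leftrightarrow\text{SRC}\Leftrightarrow(2)$. The self-adjointness of $H^{(k)}$ and $H$ supplies the uniform bound $\|(H^{(k)}\pm i)^{-1}\|\le 1$, which is what keeps every step below elementary.

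For $(1)\Rightarrow\text{SRC}$, fix $f\in\mathcal{D}$ with approximants $f^{(k)}$ from (1), set $g=(H-i)f$, and note $(H^{(k)}-i)f^{(k)}\to g$. Hence
\[
\bigl\|(H^{(k)}-i)^{-1}g-f^{(k)}\bigr\|=\bigl\|(H^{(k)}-i)^{-1}[g-(H^{(k)}-i)f^{(k)}]\bigr\|\longrightarrow 0,
\]
so $(H^{(k)}-i)^{-1}g\to(H-i)^{-1}g$; since $\mathcal{D}$ is a core, $(H-i)\mathcal{D}$ is dense, and the uniform resolvent bound promotes this to convergence on all of $\mathcal{H}$. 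Conversely, given SRC, set $f^{(k)}:=(H^{(k)}-i)^{-1}(H-i)f\in\mathrm{Dom}(H^{(k)})$: then $f^{(k)}\to f$ by SRC and $H^{(k)}f^{(k)}=(H-i)f+if^{(k)}\to Hf$. For $(2)\Rightarrow\text{SRC}$, use the Laplace formula
\[
(H-z)^{-1}f=i\int_0^\infty e^{itz}\,U(t)f\,dt\qquad(\mathrm{Im}\,z>0),
\]
together with its $H^{(k)}$-analog: the integrand converges pointwise by (2) and is dominated by $e^{-t\,\mathrm{Im}\,z}\|f\|$, so dominated convergence delivers SRC.

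The real work is $\text{SRC}\Rightarrow(2)$. First, SRC yields strong convergence $F(H^{(k)})\to F(H)$ for every bounded continuous $F:\mathbb{R}\to\mathbb{C}$, the standard functional-calculus consequence obtained by approximating $F$ uniformly on compacta by polynomials in $(x\pm i)^{-1}$ and using the uniform norm bound $\|F(H^{(k)})\|\le\|F\|_\infty$. Applied to $F_t(x)=e^{-itx}$ this gives pointwise strong convergence $U^{(k)}(t)f\to U(t)f$ for every $t$ and every $f$. To promote this to uniformity on $[0,T]$, take $f\in\mathcal{D}$ with approximants $f^{(k)}$, and combine the Lipschitz estimate $\|[U(\tau)-I]f\|\le|\tau|\,\|Hf\|$ with
\[
\bigl\|[U^{(k)}(\tau)-I]f\bigr\|\le\bigl\|[U^{(k)}(\tau)-I]f^{(k)}\bigr\|+2\|f-f^{(k)}\|\le|\tau|\,\|H^{(k)}f^{(k)}\|+2\|f-f^{(k)}\|,
\]
in which $\|H^{(k)}f^{(k)}\|$ is uniformly bounded in $k$. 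This equicontinuity, paired with pointwise convergence at the finitely many nodes of a fine partition of $[0,T]$, yields the uniform limit on $\mathcal{D}$; density of $\mathcal{D}$ together with $\|U^{(k)}(t)-U(t)\|\le 2$ then extends it to all of $\mathcal{H}$.

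The principal obstacle is the absence of a common dense invariant core: $f^{(k)}\in\mathrm{Dom}(H^{(k)})$ need not lie in $\mathrm{Dom}(H)$ and vice versa, so there is no single Duhamel identity $U^{(k)}(t)f-U(t)f=-i\int_0^t U^{(k)}(t-s)[H^{(k)}-H]U(s)f\,ds$ available on any naturally arising domain. The route above sidesteps this by using the approximants only to supply uniform-in-$k$ continuity estimates, which is exactly what the three-$\varepsilon$ partition argument requires.
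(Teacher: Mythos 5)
The paper does not prove this theorem: it is stated as a recalled result with citations to Davies (Theorem 3.17) and Reed--Simon (Chapter VIII.7), so there is no in-paper proof to compare against. Your argument is nevertheless correct and is, in substance, the standard textbook route through strong resolvent convergence that those references take, so there is no genuinely different approach to contrast.

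One small point of presentation deserves a remark. In the step $\mathrm{SRC}\Rightarrow(2)$ you use the estimate
\[
\bigl\|[U^{(k)}(\tau)-I]f\bigr\|\le|\tau|\,\|H^{(k)}f^{(k)}\|+2\|f-f^{(k)}\|,
\]
and call the result ``equicontinuity.'' As written this is not quite uniform equicontinuity in $k$, because the constant term $2\|f-f^{(k)}\|$ does not shrink with $\tau$. The argument still closes: $\|H^{(k)}f^{(k)}\|$ is bounded (it converges to $\|Hf\|$) and $\|f-f^{(k)}\|\to 0$, so for a given $\varepsilon$ one first picks $K$ large enough that $2\|f-f^{(k)}\|<\varepsilon/2$ for $k\ge K$, then picks $|\tau|$ small enough that the linear term is under $\varepsilon/2$; the finitely many indices $k<K$ are handled by the strong continuity of each individual $U^{(k)}$. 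With that reading, the three-term partition argument and the final density extension (using $\|U^{(k)}(t)-U(t)\|\le 2$) go through exactly as you describe. You might make this split explicit, but the proof is sound as it stands.
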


The theorem yields a strong uniform convergence if we can establish graph convergence of the Hamiltonians. We now present the Trotter-Kato theorems for the class of problems that interest us, treating the first and second quantized problems in sequence.

\subsection{First Quantization Example}

\begin{definition}
\label{def:gk} Let $g \in C_{c}^{\infty}(\mathbb{R})$, i.e., an infinitely differentiable function with compact support, such that $\int_{-\infty }^{\infty } g(s)ds=1$. We define $\rho (t)=\int_{\mathbb{R} }g(s)^{\ast}\, g(s+t)ds$. Moreover, for all $k>0$, we define functions $g^{(k)} $ and $\rho ^{(k)}$ by 
\begin{equation*}
g^{(k)}(t)=k \,g(kt),\qquad \rho ^{(k)}(t)=k \,\rho (kt),\qquad t\in \mathbb{R}.
\end{equation*}
Furthermore, we define two complex numbers by $\kappa _{+} :=\int_{0}^{\infty }\rho (s)ds$ and $\kappa _{-} :=\int_{-\infty }^{0}\rho (s)ds$.
\end{definition}

Note that $\kappa _{+} + \kappa _{-} =1$ and that $\kappa _{+}$ and $\kappa _{-}$ are complex conjugate: $\kappa _{+}=(\kappa _{-})^{\ast}$ (substitute $-s$ for $s$), hence $\kappa _{\pm}=\frac{1}{2} \pm i\sigma $ with $\sigma$ real.  
The choice of $\rho$ is such that $\langle g \vert g \ast f \rangle = \langle \rho \vert f \rangle$, where
$(g \ast f) (t) = \int_{-\infty }^{\infty } g(s) f(t-s) ds$ is the usual convolution.

Let $\mathfrak{h}$ be a Hilbert space and let $\mathsf{E}$ be a bounded self-adjoint operator on $\mathfrak{h}$. We consider the following family of operators on 
$L^{2}(\mathbb{R};\mathfrak{h}) \simeq \mathfrak{h} \otimes L^{2}(\mathbb{R})$: 
\begin{eqnarray}
H^{(k)} &=&i \,\partial +\mathsf{E}\,|g^{(k)}\rangle \langle g^{(k)}|
\simeq I \otimes i \,\partial + E \otimes |g^{(k)}\rangle \langle g^{(k)}|,  \notag \\
\mbox{Dom}(H^{(k)}) &=&W^{1,2}(\mathbb{R};\mathfrak{h}),  \label{eq: Hk}
\end{eqnarray}
where $W^{1,2}(X;\mathfrak{h})$, $X \subseteq \mathbb{R}$, denotes the Sobolev space of $\mathfrak{h}$-valued functions square integrable on $X$ with square integrable weak derivatives on $X$. It follows easily that $H^{(k)}$ is self-adjoint for every $k>0$ (for example by the Kato-Rellich theorem, see 
\cite{RSII}, Theorem X.12). We define a unitary operator on $\mathfrak{h}$ by 
\begin{equation}
\mathsf{S}=\dfrac{I-i\kappa _{-}\mathsf{E}}{I+i\kappa _{+}\mathsf{E}}.
\label{eq: S}
\end{equation}
and an operator $H$ on $L^{2}(\mathbb{R};\mathfrak{h})$ by 
\begin{eqnarray}
\mbox{Dom}(H) &=&\left\{ f\in W^{1,2}\left( \mathbb{R}\backslash \{0\};\mathfrak{\ h} \right) :f(0^{-})=\mathsf{S}f(0^{+})\right\} ,  \notag \\
Hf &=&i \,\partial f.
\end{eqnarray}
It follows easily that $H$ is self-adjoint, compare \cite{RSI}, VIII.2, final example. 

Remark: Any $f\in W^{1,2}\left( \mathbb{R}\backslash \{0\};\mathfrak{\ h} \right)$ is absolutely continuous both on $(-\infty,0)$ and $(0,\infty)$, see for example \cite{Gu}, 2.6 Ex.6, but the exclusion
of test functions supported at $0$ allows jumps at $0$. Higher dimensional situations ($\mathbb{R}^n$ with
$n>1$) are more complicated in this respect.  

We define strongly continuous one-parameter groups of unitaries on $L^{2}(\mathbb{R}; \mathfrak{h})$ by
\begin{equation*}
U^{\left( k\right) }\left( t\right) =\exp (-itH^{(k)}),\qquad U\left(
t\right) =\exp (-itH).
\end{equation*}

We then have the following theorem.

\begin{theorem}
\label{thm main theorem} Let $0\leq T<\infty $. Then 
\begin{equation*}
\lim_{k\rightarrow \infty }\sup_{0\leq t\leq T}\left\| \left( U^{\left(
k\right) }\left( t\right) -U\left( t\right) \right) f\right\| =0,\qquad
\forall f\in L^{2}(\mathbb{R};\mathfrak{h}).
\end{equation*}
\end{theorem}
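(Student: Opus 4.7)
The strategy is to apply the Trotter-Kato theorem (Theorem \ref{thm Trotter-Kato}) by establishing graph convergence of the Hamiltonians on a suitable core. Take
\begin{equation*}
\mathcal{D} \;=\; \bigl\{\, f_-\mathbf{1}_{(-\infty,0)} + f_+\mathbf{1}_{(0,\infty)} : f_\pm\in C_c^\infty(\mathbb{R};\mathfrak{h}),\ f_-(0) = \mathsf{S}\,f_+(0)\,\bigr\}.
\end{equation*}
Density of $\mathcal{D}$ in $\mathrm{Dom}(H)$ in the graph norm (which here equals the $W^{1,2}(\mathbb{R}\setminus\{0\})$ norm) follows by a standard argument: mollify each half-line piece of a given $f\in\mathrm{Dom}(H)$ to obtain $C_c^\infty(\mathbb{R};\mathfrak{h})$ approximants and add a small correction of the form $\eta(t)\,[\mathsf{S}f_+^{(n)}(0)-f_-^{(n)}(0)]$, with $\eta\in C_c^\infty(\mathbb{R})$, $\eta(0)=1$, to enforce the jump condition; trace continuity of $W^{1,2}\to\mathfrak{h}$ at $0$ makes this correction tend to zero in $W^{1,2}$. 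For each $f\in\mathcal{D}$ I then need $f^{(k)}\in W^{1,2}(\mathbb{R};\mathfrak{h})$ with $f^{(k)}\to f$ and $H^{(k)}f^{(k)}\to Hf$ in $L^2$.

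The natural ansatz smears the jump on the same scale $1/k$ on which $g^{(k)}$ concentrates. Set $\phi^{(k)}(t) := \int_{-\infty}^t g^{(k)}(s)\,ds$, a smooth approximate Heaviside function essentially supported on $[-1/k,1/k]$, and define
\begin{equation*}
f^{(k)}(t) \;:=\; (1-\phi^{(k)}(t))\,f_-(t) + \phi^{(k)}(t)\,f_+(t),
\end{equation*}
which is smooth, compactly supported, and agrees with $f$ outside a $1/k$-neighborhood of $0$, so $f^{(k)}\to f$ in $L^2$ is immediate. Since $(\phi^{(k)})' = g^{(k)}$,
\begin{equation*}
i(f^{(k)})'(t) \;=\; i g^{(k)}(t)\bigl(f_+(t)-f_-(t)\bigr) + i(1-\phi^{(k)}(t))\,f_-'(t) + i\phi^{(k)}(t)\,f_+'(t),
\end{equation*}
and the rescaling $s=u/k$ together with the identities $\int g(u)^*(1-\phi(u))\,du = \kappa_+$ and $\int g(u)^*\phi(u)\,du = \kappa_-$ (where $\phi(u)=\int_{-\infty}^u g(v)\,dv$) yields $\langle g^{(k)}\mid f^{(k)}\rangle = \kappa_+ f_-(0) + \kappa_- f_+(0) + O(1/k)$. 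Collecting the two terms in $H^{(k)}f^{(k)}$ proportional to $g^{(k)}(t)$ and Taylor-expanding $f_\pm$ around $0$ on the support of $g^{(k)}$, the potentially singular part becomes
\begin{equation*}
g^{(k)}(t)\,\bigl\{\,i\,(f_+(0)-f_-(0)) + \mathsf{E}\,(\kappa_+ f_-(0) + \kappa_- f_+(0))\,\bigr\} + g^{(k)}(t)\cdot O(1/k),
\end{equation*}
and the braced quantity vanishes identically: the equation $(iI+\kappa_-\mathsf{E})f_+(0) = (iI-\kappa_+\mathsf{E})f_-(0)$ rearranges (factoring $i$) to $f_-(0) = (I+i\kappa_+\mathsf{E})^{-1}(I-i\kappa_-\mathsf{E})f_+(0) = \mathsf{S}\,f_+(0)$, precisely the jump condition built into $\mathcal{D}$. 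The surviving derivative piece $i(1-\phi^{(k)})f_-' + i\phi^{(k)}f_+'$ clearly converges in $L^2$ to $i\mathbf{1}_{(-\infty,0)}f_-' + i\mathbf{1}_{(0,\infty)}f_+' = Hf$.

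The main technical difficulty, and the reason the cancellation must be engineered at exactly this level of precision, is the singular scaling $\|g^{(k)}\|_{L^2}\sim k^{1/2}$: a pointwise residual of size $g^{(k)}(t)\cdot\varepsilon_k$ contributes $O(k^{1/2}\varepsilon_k)$ in $L^2$. The smoothness of $f_\pm$ and the compactness of $\mathrm{supp}\,g$ give $g^{(k)}(t)(f_\pm(t)-f_\pm(0))$ and $g^{(k)}(t)\cdot[\langle g^{(k)}\mid f^{(k)}\rangle - \kappa_+ f_-(0)-\kappa_- f_+(0)]$ pointwise sizes of exactly $g^{(k)}(t)\cdot O(1/k)$, hence $L^2$-errors of order $O(k^{-1/2})$. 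Assembling these estimates and invoking Theorem \ref{thm Trotter-Kato} yields the asserted strong uniform convergence on compact time intervals.
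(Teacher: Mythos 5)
Your argument is correct, and it takes a genuinely different route from the paper's in two respects: the core and the approximating sequence. The paper takes $\mathcal{D} = \mathrm{Dom}(H)\cap C^{\infty}(\mathbb{R}\setminus\{0\};\mathfrak{h})$ (invoking Lieb--Loss Thm.~7.6 for density) and the convolution ansatz $f^{(k)}=g^{(k)}*f$; the singular term then emerges from the boundary contribution in $\partial(g^{(k)}*f)=g^{(k)}*\partial f+(f(0^+)-f(0^-))g^{(k)}$ together with $\langle g^{(k)}|g^{(k)}*f\rangle=\langle\rho^{(k)}|f\rangle$, whose $O(1/k)$ convergence to $\kappa_+ f(0^-)+\kappa_- f(0^+)$ is isolated in a separate convergence lemma (Lemma~\ref{lem convergefast}). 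You instead work on a core of piecewise $C_c^\infty$ functions satisfying the jump condition and use the explicit interpolant $f^{(k)}=(1-\phi^{(k)})f_-+\phi^{(k)}f_+$ with $\phi^{(k)}(t)=\int_{-\infty}^t g^{(k)}$, reproducing the same singular residual $g^{(k)}\bigl\{i(f_+(0)-f_-(0))+\mathsf{E}[\kappa_+ f_-(0)+\kappa_- f_+(0)]\bigr\}$ via the identities $\int g^*(1-\phi)=\kappa_+$, $\int g^*\phi=\kappa_-$; I verified these reduce to $\kappa_\pm=1-\kappa_\mp$ plus the change of variables $v=u-t$ in the double integral defining $\rho$, so they are sound, and the factoring-out of $i$ correctly turns the vanishing of the braced quantity into the boundary condition $f_-(0)=\mathsf{S}f_+(0)$. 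What your version buys is that the estimates are entirely pointwise and elementary: since everything in sight is $C_c^\infty$ or piecewise so, the $O(1/k)$ residuals are controlled by a single first-order Taylor bound on the $O(1/k)$-support of $g^{(k)}$, and no convolution-approximation theorem is needed for the hard singular piece. What the paper's convolution buys is that it second-quantizes directly to the $\Gamma(G^{(k)})$ construction used in Section~3, so it serves as a warm-up for the harder Fock-space argument, whereas your interpolant does not second-quantize as cleanly. One spot you should flesh out before publication is the density of your smaller core: the mollify-and-correct argument with the $\eta(t)\,[\mathsf{S}f_+^{(n)}(0)-f_-^{(n)}(0)]$ term is fine in principle (one-dimensional trace continuity of $W^{1,2}\to\mathfrak{h}$ gives $\mathsf{S}f_+^{(n)}(0)-f_-^{(n)}(0)\to \mathsf{S}f(0^+)-f(0^-)=0$, and $\|\eta\,c\|_{W^{1,2}}=\|c\|_{\mathfrak{h}}\|\eta\|_{W^{1,2}}$), but you should also say explicitly how compact support is obtained, e.g.\ by multiplying the mollified pieces by a fixed cutoff equal to $1$ near $0$ so that the boundary values are unaffected.
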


We prove Theorem \ref{thm main theorem} at the end of this subsection. From the Trotter-Kato Theorem \ref{thm Trotter-Kato}, it suffices to find, for every $f\in \mbox{Dom}(H)$, a sequence $f^{(k)}\in \mbox{Dom}(H^{(k)})$ that satisfies condition (i) of Theorem \ref{thm Trotter-Kato}.

If $g$ is a $\mathbb{C}$-valued function on $X$ and $f \in L^2(X;\mathfrak{h}) \simeq \mathfrak{h} \otimes L^2(X;\mathbb{C})$ then we use the short notation $gf$ for $(I \otimes M_g)\,f$ where $M_g$ is multiplication by $g$. With this convention we can also define $g * f \in L^2(X;\mathfrak{h})$ and $\langle g  | f \rangle \in \mathfrak{h}$ for suitable functions $g$, using the same formulas as for $\mathfrak{h} = \mathbb{C}$. 

\begin{definition}
\label{def fk} Let $f$ be an element in the domain of $H$. Define an element $f^{(k)}$ in the domain of $H^{(k)}$ by 
\begin{equation*}
f^{(k)}(t) = (g^{(k)}*f)(t) = \int_{-\infty}^\infty g^{(k)}(t-s)f(s)ds.
\end{equation*}
\end{definition}

\begin{lemma}
\label{lem convergefast}Let $\eta $ be an element of $C(0,\infty )$ with compact support and let $h$ be an element of $W^{1,2}((0,\infty);\mathfrak{h})\,\cap\, C^{1}((0,\infty);\mathfrak{h})$ such that $h(0^{+})=0$. Let $\eta ^{(k)}(x)=k \,\eta (kx)$ for all $x\in (0,\infty )$ and $k>0$. Then 
\begin{equation*}
\|\langle \eta ^{(k)}|h\rangle \|_2 \leq \dfrac{C}{k},\qquad \forall k>0,
\end{equation*}
for some positive constant $C$.
\end{lemma}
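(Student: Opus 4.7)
The strategy is to rescale the inner product so that $h$ is probed on an interval shrinking to the origin, and then to extract the decisive factor $1/k$ from the vanishing $h(0^+)=0$ combined with local boundedness of $h'$ supplied by the $C^1$ hypothesis.

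A change of variables $y = kx$ gives
\[
\langle \eta^{(k)} | h \rangle \;=\; \int_0^\infty \eta(y)^{*}\, h(y/k)\, dy,
\]
and since $\mathrm{supp}(\eta) \subset [a,b]$ for some $0<a<b$, the effective integration range is $[a,b]$ while the relevant values of $h$ all lie in the shrinking window $(0,\,b/k]$.

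By the $W^{1,2}$-regularity together with $h(0^+)=0$, the fundamental theorem of calculus gives $h(y/k) = \int_0^{y/k} h'(s)\, ds$. Since $h \in C^1((0,\infty);\mathfrak{h})$, the continuous function $h'$ is bounded by some constant $M$ on a one-sided neighbourhood $(0,\delta]$ of the origin; hence for all $k \geq b/\delta$ and all $y \in [a,b]$ one has the pointwise estimate $\|h(y/k)\|_{\mathfrak{h}} \leq M (y/k)$, and therefore
\[
\|\langle \eta^{(k)} | h \rangle\|_{\mathfrak{h}}
\;\leq\; M \int_a^b |\eta(y)|\, (y/k)\, dy
\;\leq\; \frac{M b\, \|\eta\|_{1}}{k}.
\]
For the remaining range $0 < k < b/\delta$, the crude estimate $\|\langle \eta^{(k)}|h\rangle\|_{\mathfrak{h}} \leq \|\eta\|_1 \|h\|_\infty$ (valid because $W^{1,2}$-functions on a half-line embed into $C_0$) combined with $1/k > \delta/b$ gives the same form of bound; absorbing constants yields the claim.

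The only delicate point is the pointwise estimate $\|h(y/k)\|_{\mathfrak{h}} = O(1/k)$: a bare $W^{1,2}$-bound combined with Cauchy--Schwarz produces only $\sqrt{y/k}\,\|h'\|_{L^2} = O(k^{-1/2})$, so the improvement really rests on the $L^\infty$-control of $h'$ near zero coming from the $C^1$ hypothesis. Equivalently, the argument may be cast as an integration by parts: with $\Phi(x) = -\int_x^{\infty} \eta^{(k)}(u)^{*}\,du$ (supported in $[0,b/k]$ and bounded by $\|\eta\|_1$), the boundary terms vanish since $h(0^+)=0$ and $h(\infty)=0$, leaving $\langle \eta^{(k)} | h \rangle = -\int_0^{b/k} \Phi(x)\, h'(x)\, dx$, bounded by $\|\eta\|_1 M b /k$ via the same local boundedness of $h'$.
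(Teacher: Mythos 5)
Your argument is essentially the paper's: both rescale by $y = kx$ so that $h$ is sampled only on the shrinking window $(0,\,b/k]$, then convert the vanishing boundary value $h(0^+)=0$ into a linear bound $\|h(x)\|\lesssim x$ near zero, which supplies the decisive factor $1/k$ after integrating against $\eta$. The paper phrases the linear bound as a Lipschitz estimate and applies it before substituting $u = kx$, while you substitute first and then bound via the fundamental theorem of calculus; your separate treatment of the small-$k$ range and the integration-by-parts reformulation are sound variations but not a genuinely different route.

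One remark worth making, since you flagged the delicate point yourself: the claim that $h'$ is bounded on some $(0,\delta]$ does not follow from $h\in C^1\bigl((0,\infty);\mathfrak{h}\bigr)$ alone, since $C^1$ on the open half-line permits $h'$ to blow up at $0^+$ (e.g.\ $h(x)=x^{3/4}$ near the origin lies in $W^{1,2}\cap C^1$ with $h(0^+)=0$ but has unbounded derivative, giving only $O(k^{-3/4})$). The paper's own justification (``Lipschitz on $\mathrm{supp}(\eta)$,'' then letting $y\to 0^+$) glosses over exactly the same point, so this is a shared implicit assumption rather than a defect of your write-up relative to the paper's; it is also harmless for the downstream use in Lemma \ref{lem:TKconditions}, where any rate $o(k^{-1/2})$ suffices to beat $\|g^{(k)}\|_2 = O(\sqrt{k})$, and $W^{1,2}$ alone already gives $o(k^{-1/2})$ via Cauchy--Schwarz on the shrinking interval.
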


\begin{proof}
Note that the $C^{1}$-function $h$ is Lipschitz on the support of $\eta$, that is, there exists a positive constant $L$ such that 
\begin{equation*}
\|h(x)-h(y)\|_2 \leq L|x-y|,\qquad \forall x,y\in \mathrm{supp}(\eta),
\end{equation*}
where $\mathrm{supp}(\eta)$ denotes the support of $\eta$. Taking the limit for $y$ to $0^{+}$ gives 
\begin{equation*}
\|h(x)\|_2\leq L|x|,\qquad x\in \mathrm{supp}(\eta).
\end{equation*}
We can define $M:=\max_{x\in (0,\infty )}|\eta (x)|$ and let $N$ be a number to the right of the support of $\eta $. Now we have 
\begin{eqnarray*}
\|\langle \eta ^{(k)}|h\rangle \|_2 \leq k\int_{0}^{\infty }|\eta
(kx)|\,\|h(x)\|_2 \,dx\\ \leq \dfrac{L}{k}\int_{0}^{\infty }|\eta (u)|\,u \,du\leq \dfrac{L}{k} \int_{0}^{N}Mu \,du 
=\dfrac{LMN^{2}}{2k}.
\end{eqnarray*}
\hfill
\end{proof}

\begin{lemma}
\label{lem:TKconditions} If $f$ is in $\mbox{Dom}(H)\cap C^\infty(\mathbb{R} \backslash \{0\};\mathfrak{h})$, and $f^{(k)}$ is given by Definition \ref{def fk}, then
we have 
\begin{equation*}
\begin{split}
1.\ \ \lim_{k\to \infty}\left\| f^{(k)} - f \right\|_2 = 0,\qquad\qquad 2.\
\ \lim_{k\to \infty}\left\| H^{(k)}f^{(k)} - Hf\right\|_2 = 0.
\end{split}
\end{equation*}
\end{lemma}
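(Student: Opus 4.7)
Part 1 follows routinely: since $g \in C_c^\infty(\mathbb{R})$ with $\int g = 1$, the family $\{g^{(k)}\}_{k>0}$ is an approximate identity on $L^2(\mathbb{R})$, so $f^{(k)} = g^{(k)} * f \to f$ in $L^2(\mathbb{R};\mathfrak{h})$ for every $f \in L^2(\mathbb{R};\mathfrak{h})$.

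For part 2, the plan is to split $H^{(k)}f^{(k)} - Hf$ into a convolution piece plus a ``jump'' piece that cancels by the very design of $\mathsf{S}$. The distributional derivative of $f$ on all of $\mathbb{R}$ is $\partial f = f'_{\mathrm{cl}} + (I - \mathsf{S}) f(0^+) \delta_0$, where $f'_{\mathrm{cl}}$ is the classical derivative on each half-line and the atomic term encodes the jump $f(0^+) - f(0^-) = (I - \mathsf{S}) f(0^+)$ forced by the boundary condition in $\mathrm{Dom}(H)$. Convolving with the smooth mollifier $g^{(k)}$,
\begin{equation*}
H^{(k)} f^{(k)} - Hf \;=\; i\bigl[\,g^{(k)} * f'_{\mathrm{cl}} - f'_{\mathrm{cl}}\,\bigr] \;+\; g^{(k)} \cdot v^{(k)},
\end{equation*}
where $v^{(k)} := i(I - \mathsf{S}) f(0^+) + \mathsf{E} \langle g^{(k)} | f^{(k)}\rangle \in \mathfrak{h}$. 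The first term vanishes in $L^2$ by the approximate-identity argument applied to $f'_{\mathrm{cl}} \in L^2(\mathbb{R};\mathfrak{h})$. Since $\|g^{(k)}\|_2 = \sqrt{k}\,\|g\|_2$, it suffices to establish $\|v^{(k)}\|_{\mathfrak{h}} = O(1/k)$, whence the second term has $L^2$ norm $O(1/\sqrt{k}) \to 0$.

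To control $v^{(k)}$ I would unfold $\langle g^{(k)} | f^{(k)}\rangle = \int_\mathbb{R} f(s) \rho^{(k)}(-s)\, ds$ via Fubini, split at $s = 0$, and extract leading and remainder parts on each half-line using the $k$-independent identities $\int_0^\infty \rho^{(k)} = \kappa_+$ and $\int_{-\infty}^0 \rho^{(k)} = \kappa_-$:
\begin{equation*}
\langle g^{(k)} | f^{(k)}\rangle = (\kappa_- + \kappa_+ \mathsf{S})\, f(0^+) + R^{(k)}.
\end{equation*}
Each of the two pieces of $R^{(k)}$ has the form $\int_0^\infty [f(\pm u) - f(0^\pm)]\, \rho^{(k)}(\mp u)\, du$, to which Lemma \ref{lem convergefast} applies with $h(u) = f(\pm u) - f(0^{\pm})$ (satisfying $h(0^+) = 0$ by construction, and $C^1$ near $0$ by smoothness of $f$ on $\mathbb{R}\setminus\{0\}$, after a smooth cutoff ensuring compact support -- only values near $\mathrm{supp}(\rho^{(k)})$ contribute), yielding $\|R^{(k)}\|_{\mathfrak{h}} = O(1/k)$. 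Rearranging the definition (\ref{eq: S}) of $\mathsf{S}$ and using that $\mathsf{E}$ commutes with $\mathsf{S}$ (since $\mathsf{S}$ is a function of $\mathsf{E}$) yields the algebraic identity $\mathsf{E}(\kappa_- + \kappa_+ \mathsf{S}) = -i(I - \mathsf{S})$, from which the exact cancellation $v^{(k)} = \mathsf{E}\, R^{(k)} = O(1/k)$ follows.

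The main obstacle is this cancellation itself: both factors in $g^{(k)} \cdot v^{(k)}$ are individually singular as $k \to \infty$ ($\|g^{(k)}\|_2 \sim \sqrt{k}$ while the inner product tends to a nonzero limit), so the $\sqrt{k}$-blowup of $\|g^{(k)}\|_2$ must be beaten by $o(1/\sqrt{k})$ decay of $v^{(k)}$. Only the sharp $O(1/k)$ rate of Lemma \ref{lem convergefast} (which crucially requires the boundary value $h(0^+) = 0$) combined with the algebraic identity encoded in (\ref{eq: S}) makes this work; this is the structural reason the formula for $\mathsf{S}$ in terms of $\kappa_\pm$ and $\mathsf{E}$ takes exactly the form it does.
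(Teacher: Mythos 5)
Your argument is essentially identical to the paper's: the paper's Eq.\ (\ref{eq: deriv 1st}) $\partial(g^{(k)}\ast f)=g^{(k)}\ast\partial f+(f(0^+)-f(0^-))g^{(k)}$ is exactly your distributional-derivative/jump decomposition, and the paper likewise reduces to $\langle g^{(k)}|g^{(k)}\ast f\rangle=\langle\rho^{(k)}|f\rangle$, invokes Lemma \ref{lem convergefast} for the $O(1/k)$ rate, and uses the boundary condition together with the defining formula (\ref{eq: S}) for $\mathsf{S}$ to obtain the exact cancellation that beats the $\sqrt{k}$ growth of $\|g^{(k)}\|_2$. Your repackaging (isolating $R^{(k)}$ and writing $v^{(k)}=\mathsf{E}R^{(k)}$ via the identity $\mathsf{E}(\kappa_-+\kappa_+\mathsf{S})=-i(I-\mathsf{S})$) is just a mild algebraic rearrangement of the same computation, so there is no substantive difference.
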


\begin{proof}
Note that the first limit follows immediately from a standard result on approximations by convolutions, see e.g.\ \cite[Thm.\ 2.16]{LiL97}. For the second limit, note that 
\begin{equation}
\partial (g^{(k)}\ast f)=g^{(k)}\ast \partial f + (f(0^{+})-f(0^{-}))g^{(k)},  \label{eq: deriv 1st}
\end{equation}
Because $\partial f = Hf$ and using \cite[Thm.\ 2.16]{LiL97} once more, we find that 
\begin{equation*}
\lim_{k\rightarrow \infty }g^{(k)}\ast Hf=Hf.
\end{equation*}
That is, all we need to show is that 
\begin{equation}
\lim_{k\rightarrow \infty }\left\| \left( if(0^{+})-if(0^{-})+\mathsf{E}
\langle g^{(k)}|g^{(k)}\ast f\rangle \right) g^{(k)}\right\| _{2}=0.
\label{eq: to show}
\end{equation}
Note that $\langle g^{(k)}|\,g^{(k)}\ast f\rangle =\langle \rho^{(k)}|\,f\rangle $. We can now apply Lemma \ref{lem convergefast} with $ h=f\chi _{(0,\infty )}-f(0^{+})$ and $\eta =\rho \chi _{(0,\infty )}$ (resp.\ $h=f\chi _{(-\infty ,0)}-f(0^{-})$ and $\eta =\rho \chi _{(-\infty ,0)}$) to conclude that 
\begin{equation*}
\langle \rho ^{(k)}|\,f\rangle \overset{k\rightarrow \infty }{%
\longrightarrow }\left( \kappa _{-}\right) ^{\ast }f(0^{-})+\left( \kappa
_{+}\right) ^{\ast }f(0^{+})=\kappa _{+}f(0^{-})+\kappa _{-}f(0^{+}),
\end{equation*}
with rate $\dfrac{1}{k}$. Using the boundary condition for $f$, we therefore find that 
\begin{equation*}
if(0^{+})-if(0^{-})+\mathsf{E}\langle g^{(k)}|g^{(k)}\ast f\rangle \longrightarrow
i \big[ (I-i\kappa _{-}\mathsf{E})f(0^{+}) - (I+i\kappa _{+}\mathsf{E})f(0^{-}) \big]
= 0,
\end{equation*}
with rate $\dfrac{1}{k}$. Note that the $L^{2}$-norm of $g^{(k)}$ grows with rate $\sqrt{k}$, so that the limit in Eq.\ \eqref{eq: to show} follows. This completes the proof of the Lemma.
\end{proof}

\bigskip

\begin{proof}
\textbf{[of Theorem \ref{thm main theorem}]} The Theorem follows from a combination of the results in Theorem \ref{thm Trotter-Kato} and Lemma \ref{lem:TKconditions} and the fact that $\mbox{Dom}(H)\cap C^{\infty }(\mathbb{R} \backslash \{0\};\mathfrak{h})$ is a core for $H$. The latter follows from \cite[Thm. 7.6]{LiL97}.
\end{proof}

\section{A Second Quantized Model}

Let $\mathsf{E}_{\alpha \beta }$ be bounded operators on $\mathfrak{h}$ such that $\mathsf{E}_{\alpha \beta }^{\dag }=\mathsf{E}_{\beta \alpha }$ for $\alpha ,\beta \in \left\{ 0,1 \right\} $. Consider the following family of operators on $\mathfrak{h}\otimes \mathcal{F}$ 
\begin{equation}
H^{(k)}=id\Gamma (\partial )+ \mathsf{E}_{11}A^{\dag
}(g^{(k)})A(g^{(k)})+ \mathsf{E}_{10}A^{\dag
}(g^{(k)})+ \mathsf{E}_{01}A(g^{(k)})+\mathsf{E}_{00},
\end{equation}
choosing a suitable domain $\mbox{Dom} \left( H^{(k)}\right) $ of essential self-adjointness for all $k>0$. (We conjecture that
$\mathfrak{h}\otimes \mathcal{E}(C^\infty_c(\mathbb{R}))$, where
$\mathcal{E}(C^\infty_c(\mathbb{R}))$ is the set of exponential vectors $e(f)$ with $f \in C^\infty_c(\mathbb{R})$, is a set of analytic
vectors for the $H^{(k)}$ but we haven't been able to prove this rigorously and leave it as an open problem.)

We denote the strongly continuous group of unitaries on $\mathfrak{h}\otimes \mathcal{F}$ generated by the unique self-adjoint extension of $H^{(k)}$ by $U^{\left( k\right) }\left( t\right) $. Let the triple $(\mathsf{S},\mathsf{L},\mathsf{H})$ appearing in (\ref{eq:G_unitary}) be obtained from $\mathbf{E}=\left(  \mathsf{E}_{\alpha \beta }\right) $ through (\ref{eq: Ito to Stratonovich}):
see (\ref{eq: SLH 1D}).

The space $\mathfrak{h}\otimes \mathcal{F}=\mathfrak{h}\otimes \Gamma \left(  L^{2}\left( \mathbb{R}\right) \right) $ consists of vectors $\Psi = \left( \Psi _{m}\right) _{m\geq 0}$ which are sequences of symmetric $\mathfrak{{h}-}$valued functions $\Psi _{m}\left( t_{1},\cdots ,t_{m}\right) $ where $t_{j} \in\mathbb{R}$. Following Gregoratti \cite{Gre01}, we define the following spaces: (for $I$ a Borel subset of $\mathbb{R}$ and $\mathfrak{H}$ a Hilbert space)

\begin{eqnarray*}
\mathscr{H}^{\Sigma }\left( I^{m},\mathfrak{H}\right) &=&\left\{ v\in
L^{2}\left( I^{m},\mathfrak{H}\right) :\sum_{i=1}^{m}\partial _{i}v\in
L^{2}\left( I^{m},\mathfrak{H}\right) \right\} ; \\
\mathscr{W} &=&\left\{ 
\begin{array}{c}
\Psi \in \mathfrak{h}\otimes \mathcal{F}:\Psi _{m}\in \mathscr{H}^{\Sigma
}\left(   \mathbb{R} ^{m},%
\mathfrak{h}\right) : \\ 
\sum_{m=0}^{\infty }\frac{1}{m!}\left\| \sum_{i=1}^{m}\partial _{i}\Psi
_{m}\right\| ^{2}<\infty
\end{array}
\right\} ; \\
\mathscr{V}_{s} &=&\left\{ \Psi \in \mathscr{W}:\sum_{m=0}^{\infty }\frac{1}{%
m!}\left\| \Psi _{m+1}\left( \cdot
,t_{m+1}=s\right) \right\| ^{2}<\infty \right\} ; \\
\mathscr{V}_{0^{\pm }} &=&\mathscr{V}_{0^{+}}\cap \mathscr{V}_{0-}.
\end{eqnarray*}
We remark that $\mathscr{W}$ is the natural domain for $d\Gamma (i \partial)$. On $\mathscr{V}_s$ we define the operators
\begin{eqnarray*}
\left( a(s) \, \Psi \right) = \Psi_{n+1} (\cdot, t_{n+1}= s).
\end{eqnarray*}
On the subspace $\mathscr{V}_{0^\pm}$, the operators $d\Gamma (i \partial )$ and 
$a( 0^\pm )$ are all simultaneously defined.
 
\begin{definition}[The Gregoratti Hamiltonian]
\label{def Gregoratti Ham} Define the following operator $H$ on $\mathfrak{h}\otimes \mathcal{F}$ 
\begin{eqnarray}
H \Phi & =&  d\Gamma (i\partial _{ac}) \Phi 
-i\mathsf{L}^{\dag }\mathsf{S}a\left( 0^{+}\right) \Phi 
+\left( \mathsf{H}-\dfrac{i}{2}\mathsf{L}^{\dag }\mathsf{L}\right) \Phi, 
\label{eq:CGhamiltonian0} \\
\mbox{Dom}(H)& =&\left\{   \Phi \in \mathscr{V}_{0^{\pm }}: \
a(0^{-})\Phi =\mathsf{S} a (0^{+})\Phi +\mathsf{L} \Phi
\right\} .
\label{eq:CGhamiltonian}
\end{eqnarray}
\end{definition}

It follows from the work of Chebotarev and Gregoratti \cite{Che97,Gre01} that the operator $H$ is essentially self-adjoint and its unique self-adjoint extension generates the unitary group $U\left( t\right) =\Theta _{t}V_{t}$ where $V_{t}$ is the unitary solution to the following quantum stochastic differential equation (\ref{eq: diff eq}):

\begin{eqnarray}
dV\left( t\right) &=&\left\{ 
(\mathsf{S}-1)d\Lambda  ( t ) +\mathsf{L} dA^{\dag } ( t ) 
-\mathsf{L}^{\dag }\mathsf{S}dA  ( t ) -\dfrac{1}{2}\mathsf{L}^{\dag }\mathsf{L} dt-i\mathsf{H}dt
\right\} V\left( t\right) , \notag  \\
V\left( 0\right) &=&I.
\end{eqnarray}

The main result of this section is the following theorem.

\begin{theorem}
\label{thm main result} Let $0\leq T<\infty $. We have the following 
\begin{equation*}
\lim_{k\rightarrow \infty }\sup_{0\leq t\leq T}\left\| \left( U^{\left(
k\right) }\left( t\right) -U\left( t\right) \right) \Phi \right\| =0,\qquad
\forall \Phi \in \mathfrak{h}\otimes \mathcal{F}.
\end{equation*}
\end{theorem}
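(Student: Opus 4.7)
The plan is to invoke Theorem \ref{thm Trotter-Kato} exactly as in the first-quantized case. Two things must be provided: (i) a convenient core $\mathcal{D}$ for the Gregoratti Hamiltonian $H$, and (ii) for each $\Phi\in\mathcal{D}$ a sequence of approximants $\Phi^{(k)}\in\mbox{Dom}(H^{(k)})$ with $\Phi^{(k)}\to\Phi$ and $H^{(k)}\Phi^{(k)}\to H\Phi$. For the core I would take those $\Phi\in\mbox{Dom}(H)$ with only finitely many non-zero Fock layers $\Phi_m$, each Schwartz on $\mathbb{R}^m\setminus\bigcup_{j=1}^m\{t_j=0\}$ and satisfying the Gregoratti boundary condition $a(0^-)\Phi=\mathsf{S}a(0^+)\Phi+\mathsf{L}\Phi$. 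That this is a core should follow by combining the essential self-adjointness of $H$ from \cite{Che97,Gre01} with a layer-wise smooth-approximation argument of the kind invoked at the end of Section 2.

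For the approximants set $\Phi^{(k)}=\Gamma(C_{g^{(k)}})\Phi$, where $C_{g^{(k)}}$ is convolution by $g^{(k)}$ on $L^2(\mathbb{R})$; at level $m$ this convolves $\Phi_m$ with $g^{(k)}$ in each of its $m$ temporal arguments. Choosing $g\geq 0$ with $\int g=1$ gives $\|C_{g^{(k)}}\|_{\mathrm{op}}\leq 1$, so $\Gamma(C_{g^{(k)}})$ is a Fock-space contraction, and $\Phi^{(k)}\to\Phi$ follows from scalar mollification on each of the finitely many non-zero layers. The crucial intertwining
$$A(g^{(k)})\Phi^{(k)}=\Gamma(C_{g^{(k)}})\,A(\rho^{(k)})\Phi,$$
which exploits $C_{g^{(k)}}^{\ast}g^{(k)}=\rho^{(k)}$, together with Lemma \ref{lem convergefast} applied coordinate-wise, delivers $A(g^{(k)})\Phi^{(k)}\to \kappa_{+}a(0^-)\Phi+\kappa_{-}a(0^+)\Phi$ at rate $1/k$.

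Now compute $H^{(k)}\Phi^{(k)}$ piece by piece. For $id\Gamma(\partial)\Phi^{(k)}$, apply the scalar jump identity (\ref{eq: deriv 1st}) in each coordinate: derivative and convolution commute on the absolutely continuous part, yielding $\Gamma(C_{g^{(k)}})d\Gamma(i\partial_{ac})\Phi$, plus for each $j$ a singular term $i g^{(k)}(t_j)\bigl[\Phi_m(\ldots,0^+_j,\ldots)-\Phi_m(\ldots,0^-_j,\ldots)\bigr]$. By bosonic symmetry these $m$ terms collect into a single creation-operator expression of the form $iA^\dag(g^{(k)})\bigl(a(0^+)-a(0^-)\bigr)\Phi$ modulo lower-order corrections. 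Rewriting $\mathsf{E}_{11}A^\dag(g^{(k)})A(g^{(k)})$ and $\mathsf{E}_{01}A(g^{(k)})$ via the intertwining above, and leaving $\mathsf{E}_{10}A^\dag(g^{(k)})$ intact, all singular contributions assemble into $A^\dag(g^{(k)})\,Y_{k}\Phi$, where $Y_{k}\to Y_{\infty}$ is a combination of $a(0^\pm)\Phi$ and $\Phi$ with coefficients in the $\mathsf{E}_{\alpha\beta}$ and $\kappa_\pm$. The identities (\ref{eq: SLH 1D}) relating $(\mathsf{S},\mathsf{L},\mathsf{H})$ to $\mathbf{E}$, evaluated on vectors satisfying the Gregoratti boundary condition, are exactly what force $Y_{\infty}=0$; this is the second-quantized analogue of the scalar cancellation $(I-i\kappa_{-}\mathsf{E})f(0^+)-(I+i\kappa_{+}\mathsf{E})f(0^-)=0$ from the proof of Lemma \ref{lem:TKconditions}. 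The $\sqrt{k}$ growth of $\|g^{(k)}\|_{2}$, hence of $A^\dag(g^{(k)})$ on fixed-particle sectors, is then beaten by the $1/k$ decay of $Y_{k}-Y_{\infty}$, and the bounded residue reorganises into $H\Phi = d\Gamma(i\partial_{ac})\Phi-i\mathsf{L}^\dag\mathsf{S}a(0^+)\Phi+(\mathsf{H}-\tfrac{i}{2}\mathsf{L}^\dag\mathsf{L})\Phi$.

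The main obstacle is the book-keeping behind ``$Y_{\infty}\equiv 0$'': one must match the symmetrisation combinatorics that fuse the $m$ boundary corrections into a single creation term on each layer, with the intertwining-produced $A(\rho^{(k)})\Phi$ pieces, and verify that the resulting combination is annihilated by the $(\mathsf{S},\mathsf{L},\mathsf{H})\leftrightarrow\mathbf{E}$ identities under the Gregoratti boundary condition. A secondary concern is pinning down $\mathcal{D}$ concretely enough to support the coordinate-wise applications of Lemma \ref{lem convergefast}; this may require a mild strengthening of the density results in \cite{Che97,Gre01}. Once those two points are settled, Theorem \ref{thm Trotter-Kato} delivers the strong uniform convergence claimed in the theorem.
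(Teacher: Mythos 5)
Your overall skeleton matches the paper's proof exactly: Trotter--Kato, approximants $\Phi^{(k)}=\Gamma(G^{(k)})\Phi$ with $g\geq 0$ so that $\Gamma(G^{(k)})$ is a contraction, the intertwining $A(g^{(k)})\Gamma(G^{(k)})=\Gamma(G^{(k)})A(\rho^{(k)})$, and a collection of the singular contributions into an $A^{\dag}(g^{(k)})$-term whose coefficient must vanish at rate $1/k$ to beat the $\sqrt{k}$ growth of $\|g^{(k)}\|_2$. The two points you flag as ``obstacles'' are precisely the substance of the paper's Section 4, so you have correctly located where the work lies. However, your proposed resolution of the core question contains a genuine error, and the other gap is more than book-keeping.

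The core you propose --- vectors with finitely many non-zero Fock layers, Schwartz off the coordinate hyperplanes, satisfying the Gregoratti boundary condition --- is essentially empty in general. The boundary condition is layer-coupled: written out it reads
\begin{equation*}
\Phi_{m+1}(t_1,\dots,t_m,0^-)=\mathsf{S}\,\Phi_{m+1}(t_1,\dots,t_m,0^+)+\mathsf{L}\,\Phi_m(t_1,\dots,t_m).
\end{equation*}
If $\Phi_M\neq 0$ but $\Phi_{M+1}=0$, the equation at level $M$ forces $\mathsf{L}\Phi_M=0$ almost everywhere. For injective $\mathsf{L}$ this kills $\Phi_M$, and by downward induction the whole vector reduces to the vacuum sector; even for non-injective $\mathsf{L}$ the constraint is far too restrictive to yield a dense set. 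So finite-particle vectors cannot serve as a core for $H$ unless $\mathsf{L}=0$. This is precisely why the paper works instead with Gregoratti's pseudo-exponential vectors $\Psi(\mathsf{F},h)$, which live on all particle levels simultaneously and whose time-ordered product structure automatically encodes the boundary condition; the fact that they are dense in $\mbox{Dom}(H)$ is imported from \cite{Gre01}, Propositions 4 and 5, rather than derived ad hoc.

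The pseudo-exponential structure is also what makes your first ``obstacle'' tractable. Because each layer of $\Psi(\mathsf{F},h)$ factorises as $\vec{T}\mathsf{F}_{t_1}\cdots\mathsf{F}_{t_m}h$ and the $\mathsf{F}_t$ commute, the quantities $Z_m^{\pm}$ that measure the error $A(\rho^{(k)})\Phi-\tfrac12(a(0^+)+a(0^-))\Phi$ factor out a \emph{single} scalar $O(1/k)$ prefactor of the type handled by Lemma~\ref{lem convergefast}, uniformly over $m$, and the $\ell^2$-sum over layers converges by the exponential-vector growth. This is what Lemmas~\ref{lem:pseudo} and \ref{lem:extra} establish. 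Without such a factorisation --- e.g., on an ad hoc Schwartz core --- one would have to control the $m$-dependence of the mollification error on each layer directly, and it is not clear your outline supplies that. Your algebraic claim that the cancellation reduces to the scalar identity $(I-i\kappa_-\mathsf{E})f(0^+)-(I+i\kappa_+\mathsf{E})f(0^-)=0$ is on the right track in spirit, but note that with $g\geq 0$ you have $\kappa_\pm=\tfrac12$, which is exactly the normalisation under which (\ref{eq: SLH 1D}) is stated and under which the paper's cancellation $i(I+\tfrac{i}{2}\mathsf{E}_{11})[\mathsf{S}a(0^+)\Phi+\mathsf{L}\Phi-a(0^-)\Phi]=0$ goes through; keeping generic $\kappa_\pm$ in the second-quantized setting as you write is inconsistent with the positivity you imposed on $g$.

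So: the approach is right and matches the paper, but the proposed core must be replaced by pseudo-exponential vectors, and the $O(1/k)$ estimate has to be pushed through uniformly over particle number using their product structure, as done in Lemmas~\ref{lem:pseudo} and \ref{lem:extra}.
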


Before proving the theorem (see the end of this section), we make some preparations. As in the previous section, we would like to use the Trotter-Kato Theorem, therefore, for every $\Phi $ in a core for $\mbox{Dom}(H)$, we need to construct an approximating sequence $\Phi ^{(k)}$ that satisfies the first condition of Theorem \ref{thm Trotter-Kato}. We again employ a smearing through convolution with $g^{\left( k\right) }$, this time applied as a second quantization.

\begin{definition}
\label{def convolution} Let $g^{\left( k\right) }$ be as in Definition \ref{def:gk} and assume further that $g(t) \ge 0$ for all t (hence $\|g\|_1=1$).
 Let $G^{(k)}:L^{2}(\mathbb{R})\rightarrow L^{2}(\mathbb{R})$ be the convolution with $g^{(k)}$, i.e.\ 
\begin{equation*}
G^{(k)}h=g^{(k)}\ast h,\qquad \forall h\in L^{2}(\mathbb{R}).
\end{equation*}
Let $\Phi $ be an element in $\mbox{Dom}(H)$. We define an element $\Phi ^{(k)}$ in the domain of $H^{(k)}$ by 
\begin{equation}
\Phi ^{(k)}=\Gamma (G^{(k)})\Phi .  \label{eq: Phi k}
\end{equation}
Here $\Gamma (G^{(k)})$ denotes the second quantization of $G^{(k)}$. 
\end{definition}

Note that $G^{(k)}$ is a contraction ($\Vert g^{(k)}\Vert _{1} = 1$, i.e.\ $\Vert \hat{g}^{(k)}\Vert _{\infty }\leq 1$ with $\hat{g}^{(k)}$ the Fourier transform $\hat{g}^{(k)}=\int_{-\infty}^{\infty} g^{(k)}(t)e^{-i\omega t}dt$), so its second quantization is well-defined). The positivity assumption on $g$ implies that $\kappa_+ = \kappa_- = \frac{1}{2}$ (which agrees with Section 1.2). 

\begin{lemma}
\label{lem lieblossfock} For all $\Phi \in \mathfrak{h}\otimes \mathcal{F}$, we have 
\begin{equation*}
\lim_{k\rightarrow \infty }\Gamma (G^{(k)})\Phi =\Phi .
\end{equation*}
\end{lemma}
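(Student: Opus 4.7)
The plan is to exploit the fact that $G^{(k)}$ is a contraction on $L^2(\mathbb{R})$ converging strongly to the identity, and to pass this to the second quantization via density and uniform boundedness on the Fock space.

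First I would record the two essential properties of $G^{(k)}$: (a) $\|G^{(k)}\|\le 1$, since $\|g^{(k)}\|_1=1$ implies $\|\hat g^{(k)}\|_\infty\le 1$ and convolution is a Fourier multiplier; and (b) $G^{(k)}h\to h$ in $L^2(\mathbb{R})$ as $k\to\infty$ for every $h\in L^2(\mathbb{R})$, which is standard approximation by convolution with an approximate identity (e.g.\ \cite[Thm.\ 2.16]{LiL97}). Functoriality of $\Gamma$ on contractions then gives $\|I_{\mathfrak{h}}\otimes\Gamma(G^{(k)})\|\le 1$ uniformly in $k$.

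Next I would verify strong convergence on a total set. The natural choice is the set of exponential vectors $u\otimes e(f)$ with $u\in\mathfrak{h}$ and $f\in L^2(\mathbb{R})$, which is total in $\mathfrak{h}\otimes\mathcal{F}$. On such vectors
\begin{equation*}
\bigl(I_{\mathfrak{h}}\otimes\Gamma(G^{(k)})\bigr)\bigl(u\otimes e(f)\bigr)=u\otimes e(G^{(k)}f).
\end{equation*}
By the identity
\begin{equation*}
\|e(h)-e(f)\|^{2}=e^{\|h\|^{2}}+e^{\|f\|^{2}}-2\,\mathrm{Re}\,e^{\langle f,h\rangle},
\end{equation*}
the map $h\mapsto e(h)$ is continuous on $L^2(\mathbb{R})$, so from $G^{(k)}f\to f$ we obtain $e(G^{(k)}f)\to e(f)$ in $\mathcal{F}$, and thus $\Gamma(G^{(k)})\Phi\to\Phi$ on every exponential vector.

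Finally, I would bootstrap to arbitrary $\Phi\in\mathfrak{h}\otimes\mathcal{F}$ by the standard $3\varepsilon$-argument: given $\varepsilon>0$, choose a finite linear combination $\Phi_0$ of exponential vectors with $\|\Phi-\Phi_0\|<\varepsilon$, then
\begin{equation*}
\|\Gamma(G^{(k)})\Phi-\Phi\|\le\|\Gamma(G^{(k)})\|\,\|\Phi-\Phi_0\|+\|\Gamma(G^{(k)})\Phi_0-\Phi_0\|+\|\Phi_0-\Phi\|\le 2\varepsilon+\|\Gamma(G^{(k)})\Phi_0-\Phi_0\|,
\end{equation*}
and the last term tends to $0$ by the previous step. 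There is no real obstacle here; the only point to be slightly careful about is the uniform contraction bound, which is why the $L^1$-normalisation $\|g^{(k)}\|_1=1$ (guaranteed by positivity of $g$ and $\int g=1$) is used rather than merely the $L^2$-convergence of $g^{(k)}*h$.
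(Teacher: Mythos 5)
Your proof is correct and follows essentially the same route as the paper: both reduce to exponential vectors via density and the uniform contraction bound on $\Gamma(G^{(k)})$, then compute the norm of the difference explicitly from the exponential-vector inner product formula and conclude with the standard convolution-approximation result \cite[Thm.\ 2.16]{LiL97}. The paper simply writes out the squared norm on $v\otimes e(h)$ directly rather than isolating the continuity of $h\mapsto e(h)$ and spelling out the $3\varepsilon$-argument, but these are the same ingredients.
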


\begin{proof}
Since the linear span of exponential vectors $v\otimes e(h)$ is dense in $\mathfrak{h}\otimes \mathcal{F}$ and $\Gamma (G^{(k)})$ is bounded, it is enough to prove the Lemma for all vectors of the form $\Phi =v\otimes e(h)$. We have 
\begin{gather*}
\Vert \Gamma (G^{(k)})v\otimes e(h)-v\otimes e(h)\Vert ^{2}= \\
\Vert v\Vert ^{2}\left[ \exp (\Vert G^{(k)}h\Vert ^{2})+\exp (\Vert h\Vert
^{2})-\exp (\langle G^{(k)}h|h\rangle )-\exp (\langle h|G^{(k)}h\rangle )%
\right] \rightarrow 0,
\end{gather*}
where in the last step we used \cite[Thm.\ 2.16]{LiL97}.
\end{proof}

We now recall the following result, see for instance \cite{Petz}.

\begin{lemma}
\label{lem convann} Let $C:\ L^{2}(\mathbb{R})\rightarrow L^{2}(\mathbb{R})$ be a contraction. We have for $h\in L^{2}(\mathbb{R})$ 
\begin{equation*}
\Gamma (C)\Big( \mbox{Dom} (A(C^{\dag }h))\Big)\subset \mbox{Dom}(A(h)).
\end{equation*}
Moreover, on the domain of $A(C^{\dag }h)$, we have 
\begin{equation*}
A(h)\Gamma (C)=\Gamma (C)A(C^{\dag }h).
\end{equation*}
\end{lemma}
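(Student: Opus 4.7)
The plan is to check the intertwining identity first on the algebraic linear span $\mathcal{E}$ of exponential vectors $\{e(f): f \in L^2(\mathbb{R})\}$, where every operator acts by an explicit formula, and then extend to the full domain of $A(C^\dag h)$ by a closure argument that uses the contractivity of $\Gamma(C)$ together with the closedness of $A(h)$.

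On $\mathcal{E}$ the identity is immediate. Recall that $\Gamma(C)e(f) = e(Cf)$ and, for any $h' \in L^2(\mathbb{R})$, the exponential vector $e(f)$ lies in $\mbox{Dom}(A(h'))$ with $A(h')e(f) = \langle h'|f\rangle\, e(f)$. Then
\begin{equation*}
A(h)\Gamma(C) e(f) = \langle h\,|\,Cf\rangle\, e(Cf) = \langle C^\dag h\,|\,f\rangle\, e(Cf) = \Gamma(C) A(C^\dag h) e(f),
\end{equation*}
so the two sides agree on each exponential vector, and hence on $\mathcal{E}$ by linearity.

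Next I would invoke the standard fact that $\mathcal{E}$ is a core for every annihilation operator $A(h')$ (and in particular for $A(C^\dag h)$); indeed $A(h')$ is usually constructed as the closure of its action on $\mathcal{E}$, see \cite{Petz}. Given $\Phi \in \mbox{Dom}(A(C^\dag h))$, pick $\Phi_n \in \mathcal{E}$ with $\Phi_n \to \Phi$ and $A(C^\dag h)\Phi_n \to A(C^\dag h)\Phi$. Since $C$ is a contraction, $\Gamma(C)$ is a bounded (in fact contractive) operator on Fock space, so $\Gamma(C)\Phi_n \to \Gamma(C)\Phi$ and $\Gamma(C)A(C^\dag h)\Phi_n \to \Gamma(C)A(C^\dag h)\Phi$ in norm. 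The identity established on $\mathcal{E}$ gives $A(h)\Gamma(C)\Phi_n = \Gamma(C)A(C^\dag h)\Phi_n$, whose right-hand side converges; closedness of $A(h)$ therefore forces $\Gamma(C)\Phi \in \mbox{Dom}(A(h))$ together with $A(h)\Gamma(C)\Phi = \Gamma(C)A(C^\dag h)\Phi$, which is the claim.

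The only genuinely nontrivial ingredient is the coreness of $\mathcal{E}$ for $A(C^\dag h)$; the rest is a one-line calculation on exponential vectors combined with a routine closure argument. This coreness is a standard feature of the Fock-space construction of annihilators and can simply be imported from \cite{Petz} or Parthasarathy's monograph, so no new analytic work is required.
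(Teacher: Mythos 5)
Your proof is correct, and it is worth noting that the paper does not itself prove this lemma: it is introduced with ``We now recall the following result, see for instance \cite{Petz},'' so there is no in-paper argument to compare against. Your route is the standard one. The computation on an exponential vector,
\begin{equation*}
A(h)\Gamma(C)e(f) = \langle h\,|\,Cf\rangle\, e(Cf) = \langle C^{\dag} h\,|\,f\rangle\, e(Cf) = \Gamma(C)A(C^{\dag}h)e(f),
\end{equation*}
is right, and the closure argument is carried out correctly: contractivity of $\Gamma(C)$ gives convergence of $\Gamma(C)\Phi_n$ and of $\Gamma(C)A(C^{\dag}h)\Phi_n$, the identity on $\mathcal{E}$ identifies the latter with $A(h)\Gamma(C)\Phi_n$, and closedness of $A(h)$ then yields simultaneously the domain inclusion $\Gamma(C)\Phi \in \mbox{Dom}(A(h))$ and the operator identity. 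The single ingredient you import without proof --- that the linear span $\mathcal{E}$ of exponential vectors is a core for $A(C^{\dag}h)$ --- is exactly as standard as you claim; in \cite{Petz} and in Parthasarathy's monograph the annihilation operator is in effect defined as the closure of its restriction to $\mathcal{E}$, so the coreness is immediate from the construction. In short, your proposal supplies, in the expected way, the argument that the paper chose only to cite.
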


Note that we have the following second quantized version of equation (\ref{eq: deriv 1st}): 
\begin{equation*}
d\Gamma (i\partial )\Phi ^{\left( k\right) }=\Gamma (G^{(k)})d\Gamma
(i\partial _{ac})\Phi +iA^{\dag }(g^{\left( k\right) })\Gamma
(G^{(k)})a_{\jmath }\Phi ,
\end{equation*}
where 
\begin{equation*}
\left( a_{\jmath }\Phi \right) _{m}\left( t_{1},\cdots ,t_{m}\right) =\Phi
_{m+1}\left( t_{1},\cdots ,t_{m},0^{+}\right) -\Phi _{m+1}\left(
t_{1},\cdots ,t_{m},0^{-}\right) .
\end{equation*}
The action of $H^{\left( k\right) }$ on $\Phi ^{\left( k\right) }$ can now be written as 
\begin{eqnarray}
H^{\left( k\right) }\Phi ^{\left( k\right) } &=&\Gamma (G^{(k)})d\Gamma
\left( i\partial _{ac}\right) \Phi  \notag \\
&&+A^{\dag }(g^{\left( k\right) })\Gamma (G^{(k)})\left( i a_{\jmath }\Phi +
\mathsf{E}_{11}A(\rho ^{\left( k\right) })\Phi +\mathsf{E}_{10}\Phi \right) 
\notag \\
&&+\mathsf{E}_{01}\Gamma (G^{(k)})A(\rho ^{\left( k\right) })\Phi +\mathsf{E}_{00}\Gamma (G^{(k)})\Phi .  \label{eq:Hkfk}
\end{eqnarray}
Here we have used Lemma \ref{lem convann} and the fact that $A(G^{(k)\dag
}g^{(k)})=A(\rho ^{(k)})$.

\begin{lemma}
\label{lem:technical} The singular component of equation $\left( \ref{eq:Hkfk}\right) $ converges strongly to zero as $k\rightarrow \infty $, i.e., 
\begin{equation*}
\left\| A^{\dag }(g^{\left( k\right) })\Gamma (G^{(k)})\left( i a_{\jmath
}\Phi +\mathsf{E}_{11}A(\rho ^{\left( k\right) })\Phi +\mathsf{E}_{10}\Phi
\right) \right\| _{2}\overset{k\rightarrow \infty }{\longrightarrow }0,
\end{equation*}
for all $\Phi $ in a core domain $\mathcal{D}$ of $H$.
\end{lemma}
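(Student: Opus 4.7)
The plan is to reduce Lemma \ref{lem:technical} to the first-quantized estimate in Lemma \ref{lem convergefast}, combined with the standard bound $\|A^\dagger(f)\chi\|_2 \le \|f\|_2\,\|(N+I)^{1/2}\chi\|_2$ and the SLH algebra (\ref{eq: SLH 1D}). First I would choose the core $\mathcal{D}\subset \mbox{Dom}(H)$ to consist of finite-particle vectors $\Phi=(\Phi_m)_{m\le M}$ whose components $\Phi_m$ are smooth away from the hyperplanes $\{t_i=0\}$, have compact support, and satisfy the Gregoratti boundary condition $a(0^-)\Phi=\mathsf{S}a(0^+)\Phi+\mathsf{L}\Phi$. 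A core of this form exists by the Chebotarev--Gregoratti construction and is analogous to the core $\mbox{Dom}(H)\cap C^\infty(\mathbb{R}\setminus\{0\};\mathfrak{h})$ used for the first-quantized Theorem \ref{thm main theorem}.

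Next I would identify the limit of the bracketed expression $\xi_k := i a_\jmath \Phi+\mathsf{E}_{11}A(\rho^{(k)})\Phi+\mathsf{E}_{10}\Phi$. Using $a_\jmath=a(0^+)-a(0^-)$ and the boundary condition, one rewrites $i a_\jmath \Phi = i(I-\mathsf{S})a(0^+)\Phi - i\mathsf{L}\Phi$. A many-variable version of Lemma \ref{lem convergefast}, applied slicewise in the last argument $t_{m+1}$ of each component $\Phi_{m+1}$ and summed with the Fock weights, yields the rate estimate
\begin{equation*}
\bigl\|A(\rho^{(k)})\Phi-\tfrac{1}{2}\bigl(a(0^+)+a(0^-)\bigr)\Phi\bigr\|_2 = O(1/k),
\end{equation*}
so that, after using the boundary condition again, $A(\rho^{(k)})\Phi=\tfrac{1}{2}(I+\mathsf{S})a(0^+)\Phi+\tfrac{1}{2}\mathsf{L}\Phi+O(1/k)$. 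Substituting and invoking the two SLH identities $\tfrac{1}{2}\mathsf{E}_{11}(I+\mathsf{S})=i(\mathsf{S}-I)$ and $-i\mathsf{L}+\tfrac{1}{2}\mathsf{E}_{11}\mathsf{L}=-\mathsf{E}_{10}$ (both immediate from (\ref{eq: SLH 1D}) with $\kappa_{\pm}=1/2$) the leading terms cancel identically and we obtain $\|\xi_k\|_2 = O(1/k)$.

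Finally I would absorb the creation operator. Since $G^{(k)}$ is a contraction, $\Gamma(G^{(k)})$ is a contraction commuting with the number operator $N$, so $\chi_k:=\Gamma(G^{(k)})\xi_k$ is supported in Fock sectors of number at most $M+1$ (inherited from the core), and $\|(N+I)^{1/2}\chi_k\|_2\le \sqrt{M+2}\,\|\xi_k\|_2=O(1/k)$. Together with $\|g^{(k)}\|_2=\sqrt{k}\,\|g\|_2$ and the creation-operator estimate $\|A^\dagger(g^{(k)})\chi_k\|_2\le \|g^{(k)}\|_2\,\|(N+I)^{1/2}\chi_k\|_2$, this gives
\begin{equation*}
\bigl\|A^\dagger(g^{(k)})\Gamma(G^{(k)})\xi_k\bigr\|_2 = O(\sqrt{k})\cdot O(1/k)=O(1/\sqrt{k})\longrightarrow 0.
\end{equation*}

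The main obstacle I anticipate is the upgrade from the scalar Lemma \ref{lem convergefast} to the Fock-space rate estimate used in the second step. The pointwise-in-$(t_1,\dots,t_m)$ application of Lemma \ref{lem convergefast} produces a Lipschitz constant $L_m(t_1,\dots,t_m)$ depending on partial derivatives of $\Phi_{m+1}$ along the $t_{m+1}$ direction near $0$, and these constants must be integrable against the Fock weights $1/m!$; this is precisely why the core $\mathcal{D}$ must be chosen with finite particle number (or, at worst, with super-exponential decay in $m$) and with $C^1$-smoothness in each argument up to the wall $t_i=0^\pm$. Once such a core is fixed, the rest is bookkeeping.
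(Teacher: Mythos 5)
Your overall strategy matches the paper's in spirit: rewrite the bracket $\xi_k=ia_\jmath\Phi+\mathsf{E}_{11}A(\rho^{(k)})\Phi+\mathsf{E}_{10}\Phi$ using the boundary condition, establish $\|\xi_k\|_2=O(1/k)$, and then absorb $A^\dagger(g^{(k)})$, which costs a factor $\|g^{(k)}\|_2=O(\sqrt k)$. Your SLH algebra is equivalent to the paper's identity $i a_\jmath\Phi+\mathsf{E}_{11}\left(\tfrac12 a(0^+)+\tfrac12 a(0^-)\right)\Phi+\mathsf{E}_{10}\Phi=i\left(I+\tfrac i2\mathsf{E}_{11}\right)\left[\mathsf{S}a(0^+)\Phi+\mathsf{L}\Phi-a(0^-)\Phi\right]=0$.

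However, there is a genuine gap in the choice of core. You take $\mathcal{D}$ to consist of \emph{finite-particle} vectors $\Phi=(\Phi_m)_{m\le M}$ satisfying the Gregoratti boundary condition $a(0^-)\Phi=\mathsf{S}a(0^+)\Phi+\mathsf{L}\Phi$. Reading this condition at the top level $m=M$ gives $\Phi_{M+1}(\cdot,0^-)=\mathsf{S}\Phi_{M+1}(\cdot,0^+)+\mathsf{L}\Phi_M$, and since $\Phi_{M+1}=0$ this forces $\mathsf{L}\Phi_M=0$. When $\mathsf{L}$ is injective (the generic case, e.g.\ $\mathsf{E}_{10}$ invertible) this forces $\Phi_M=0$ and, by downward induction, $\Phi\equiv0$. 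In other words, unless $\mathsf{L}$ has a large kernel the finite-particle subspace of $\mbox{Dom}(H)$ is trivial, so it cannot be a core. This is precisely why the paper works instead with Gregoratti's pseudo-exponential vectors $\Psi(\mathsf{F},h)$: the coupling constant $\mathsf{L}$ forces every sector to be populated, and Gregoratti's Propositions 4 and 5 in \cite{Gre01} establish that these (non-finite-particle) vectors form a core in $\mbox{Dom}(H)\cap\mathscr{V}_{0^\pm}$.

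This gap propagates to your final step. The clean estimate $\|A^\dagger(g^{(k)})\chi_k\|_2\le\|g^{(k)}\|_2\,\|(N+I)^{1/2}\chi_k\|_2\le\sqrt{M+2}\,\|g^{(k)}\|_2\,\|\xi_k\|_2$ is exactly where you need the finite-particle structure, and it does not survive the switch to pseudo-exponential vectors, for which $(N+I)^{1/2}$ is genuinely unbounded on the relevant vectors. The paper avoids this by writing $\|A^\dagger(g^{(k)})\Gamma(G^{(k)})V^{(k)}\|_2^2\le\|A(g^{(k)})\Gamma(G^{(k)})V^{(k)}\|_2^2+\|g^{(k)}\|_2^2\|V^{(k)}\|_2^2$ and then bounding the first term separately (Lemma~\ref{lem:extra}) by an argument that exploits the explicit chronologically-ordered product structure of $\Psi(\mathsf{F},h)$ level by level; the $O(1/k)$ rate (Lemma~\ref{lem:pseudo}) likewise comes from this structure, not from a generic smoothness assumption. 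If you want to salvage your number-operator route you would have to prove $\|(N+I)^{1/2}\xi_k\|_2=O(1/k)$ directly on the pseudo-exponential core, which requires tracking the $m$-dependence of the slicewise estimates — essentially what the paper's Lemma~\ref{lem:pseudo} does in disguise, so there is no shortcut here.
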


We defer the proof of this lemma to the next section.

Using Lemma \ref{lem lieblossfock}, we find that the first term in Equation \eqref{eq:Hkfk} converges to the first term in the Hamiltonian $H$ given by Equation \eqref{eq:CGhamiltonian0}, i.e.
\begin{equation*}
\lim_{k\rightarrow \infty }\left\| \Gamma (G^{(k)})d\Gamma \left( i\partial
_{ac}\right) \Phi -d\Gamma \left( i\partial _{ac}\right) \Phi \right\|
_{2}=0.
\end{equation*}
In the proof of the Lemma \ref{lem:technical}, it is shown that $A(\rho ^{(k)})\Phi $ converges in $L^{2}$-norm to $\frac{1}{2} a(0^{-})\Phi + \frac{1}{2} a(0^{+})\Phi $. Therefore, we find for the last line of Equation  (\ref{eq:Hkfk} )
\begin{equation*}
\mathsf{E}_{01}\Gamma (G^{(k)})A(\rho ^{\left( k\right) })\Phi +\mathsf{E}_{00}\Gamma (G^{(k)})\Phi \longrightarrow \mathsf{E}_{01}
\left( \frac{1}{2} a\left( 0^{+}\right) + \frac{1}{2} a\left( 0^{-}\right) \right) \Phi +\mathsf{E}_{00}\Phi .
\end{equation*}
Employing the boundary condition, we have that 
\begin{eqnarray*}
&&\mathsf{E}_{01}\left( \frac{1}{2} a\left( 0^{+}\right) + \frac{1}{2} a\left(
0^{-}\right) \right) \Phi +\mathsf{E}_{00}\Phi \\
&=&\mathsf{E}_{01}\left( \frac{1}{2} a\left( 0^{+}\right) \Phi + \frac{1}{2}
\left[ \mathsf{S}\, a\left(0^{+}\right) \Phi + \mathsf{L}\, \Phi \right] \right) +\mathsf{E}_{00}\Phi \\
&\equiv &-i\mathsf{L}^{\dag }\mathsf{S}a\left( 0^{+}\right) \Phi +(\mathsf{H} -\dfrac{i}{2}\mathsf{L}^{\dag }\mathsf{L})\Phi .
\end{eqnarray*}
Here we have used the algebraic identities 
\begin{equation*}
\mathsf{E}_{01}\left( \frac{1}{2} +\frac{1}{2} \mathsf{S}\right) 
=\mathsf{E}_{01}\left( \frac{1}{2}+\frac{1}{2}\dfrac{I-i\frac{1}{2}\mathsf{E}_{11}}{I+i\frac{1}{2}\mathsf{E}_{11}}\right) 
=\mathsf{E}_{01}  \frac{1}{I+i \frac{1}{2} \mathsf{E}_{11}}\equiv -iL^{\dag }\mathsf{S},
\end{equation*}

\begin{equation*}
-i\dfrac{\frac{1}{2}}{I+i \frac{1}{2}\mathsf{E}_{11}}=\dfrac{1}{2} \mathrm{Im} \, \left\{ 
\dfrac{\frac{1}{2}}{I+i\frac{1}{2}\mathsf{E}_{11}}\right\} -\dfrac{i}{2} \dfrac{I}{I+i\frac{1}{2}\mathsf{E}_{11}}\dfrac{I}{I-i\frac{1}{2}\mathsf{E}_{11}}.
\end{equation*}
Applying the Trotter-Kato Theorem, this completes the proof of our main result Theorem \ref{thm main result}.

\section{Proof of Lemma \ref{lem:technical}}

Setting $V^{\left( k\right) }=i a_{\jmath }\Phi +\mathsf{E}_{11}A(\rho
^{\left( k\right) })\Phi +\mathsf{E}_{10}\Phi $, we see that 
\begin{eqnarray*}
&&\left\| A^{\dag }(g^{\left( k\right) })\Gamma (G^{(k)})V^{\left( k\right)
}\right\| _{2}^{2} \\
&=&\left\langle \Gamma (G^{(k)})V^{\left( k\right) }\Big|\,A(g^{(k)})A^{\dag
}(g^{(k)})\Gamma (G^{(k)})V^{\left( k\right) }\right\rangle \\
&=&\left\langle \Gamma (G^{(k)})V^{\left( k\right) }\Big|\,\left( A^{\dag
}(g^{(k)})A(g^{(k)})+\Vert g^{(k)}\Vert _{2}^{2}\right) \Gamma
(G^{(k)})V^{\left( k\right) }\right\rangle \\
&\leq &\left\| A(g^{\left( k\right) })\Gamma (G^{(k)})V^{\left( k\right)
}\right\| _{2}^{2}+\Vert g^{(k)}\Vert _{2}^{2}\Vert V^{\left( k\right)
}\Vert _{2}^{2},
\end{eqnarray*}
where in the last step we used that $\Gamma (G^{(k)})$ is a contraction. We need to establish two further results:
the first is that  $V^{\left( k\right) }$ goes to $0$ sufficiently quickly and we prove this in Lemma
\ref{lem:pseudo} below; then we will have to show that this implies that the first term $\left\| A(g^{\left( k\right) })\Gamma (G^{(k)})V^{\left( k\right) }\right\| _{2}^{2}$ converges to $0$ and we prove this in Lemma \ref{lem:extra}. 

If we  accept these results for the moment, then from the boundary conditions we have 
\begin{eqnarray*}
&&ia_{\jmath }\Phi +\mathsf{E}_{11}\left( \frac{1}{2} a\left( 0^{+}\right)
+\frac{1}{2}a\left( 0^{-}\right) \right) \Phi +\mathsf{E}_{10}\Phi \\
&=&i\left( I-i\frac{1}{2}\mathsf{E}_{11}\right) a\left( 0^{+}\right) \Phi
-i\left( I+i\frac{1}{2} \mathsf{E}_{11}\right) a\left( 0^{-}\right) \Phi +%
\mathsf{E}_{10}\Phi \\
&=&i\left( I+i\frac{1}{2} \mathsf{E}_{11}\right) \left[ Sa\left( 0^{+}\right)
\Phi +L\Phi -a\left( 0^{-}\right) \Phi \right] =0
\end{eqnarray*}
so that, in fact, 
\begin{equation*}
V^{\left( k\right) }=\mathsf{E}_{11}\left[ A(\rho ^{\left( k\right) })\Phi
-\left( \frac{1}{2}a\left( 0^{+}\right) +\frac{1}{2}a\left( 0^{-}\right)
\right) \Phi \right]
\end{equation*}
As $\Vert g^{k}\Vert _{2}$ grows at rate $\sqrt{k}$, it suffices to show that $A(\rho ^{\left( k\right) })\Phi -(\frac{1}{2}a\left( 0^{+}\right) +\frac{1}{2}a\left( 0^{-}\right) )\Phi $ goes to $0$ in norm with rate faster than $\dfrac{1}{\sqrt{k}}$. We will now establish this result below, but first we need to recall the definition of a pseudo-exponential vector from \cite{Gre01}.

\begin{definition}
Let $F \colon t \mapsto \mathsf{F}_t$ be a function from $  \mathbb{R}$ to $\mathfrak{B}\left( \mathfrak{h}\right) $ and define the corresponding pseudo-exponential vector $\Psi \left( \mathsf{F},h\right) $ as 
\begin{equation*}
\left[ \Psi \left( \mathsf{F},h\right) \right] _{m}\left( t_{1},\cdots
,t_{m}\right) =\vec{T}\mathsf{F}_{t_{1}}\cdots \mathsf{F}_{t_{m}}h
\end{equation*}
for given $h\in \mathfrak{h}$, where $\vec{T}$ denotes chronological ordering. That is 
\begin{equation*}
\vec{T}\mathsf{F}_{t_{1}}\cdots \mathsf{F}_{t_{m}}=\mathsf{F}_{t_{\sigma
\left( 1\right) }}\cdots \mathsf{F}_{t_{\sigma \left( m\right) }}
\end{equation*}
where $\sigma $ is a permutation for which $t_{\sigma \left( 1\right) }\geq
\cdots \geq t_{\sigma \left( m\right) }$.
\end{definition}

\begin{lemma}
\label{lem:pseudo}
Let $v\in W^{1,2}\left( \mathbb{R}/\left\{ 0\right\}  \right) $ and $u\in
W^{1,2}\left( \mathbb{R}/\left\{ 0\right\} \right) $ with $\left. u\right| _{\mathbb{R}_{+}}=0$ and $u\left( 0^{-}\right) =1$, then define $\mathsf{F}_{t} $ by 
\begin{equation}
\mathsf{F}_t =v \left( t\right) +u\left( t\right) \left[
\mathsf{S} v \left( 0^{+}\right) +\mathsf{L} -v \left(
0^{-}\right) \right]
\label{eq:pseudovector_F}
\end{equation}
then the domain $\mathcal{D}$ of such pseudo-exponential vectors $\Phi =\Psi \left( \mathsf{F},h\right) $ is a core for $H$. Moreover, for each such vector we have 
\begin{equation*}
\left\| A(\rho ^{\left( k\right) })\Phi -\left( \frac{1}{2}a\left(
0^{+}\right) +\frac{1}{2}a\left( 0^{-}\right) \right) \Phi \right\|
_{2}=O\left( \frac{1}{k}\right) .
\end{equation*}
\end{lemma}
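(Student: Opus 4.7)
The plan is to address the two assertions of the lemma separately: first, that the family $\mathcal{D}$ of pseudo-exponential vectors $\Psi(\mathsf{F},h)$ with $\mathsf{F}$ of the form (\ref{eq:pseudovector_F}) lies in $\mbox{Dom}(H)$ and is a core for it, and second the rate estimate.

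For the first assertion I would verify directly from the pseudo-exponential structure that each $\Phi=\Psi(\mathsf{F},h)$ belongs to $\mathscr{V}_{0^{\pm}}$; the requisite regularity and summability inherit from the $W^{1,2}$ regularity of the \emph{scalar} weights $v$ and $u$ and the boundedness of the operator corrections $\mathsf{S}$, $\mathsf{L}$. The boundary identity $a(0^{-})\Phi=\mathsf{S}a(0^{+})\Phi+\mathsf{L}\Phi$ reduces componentwise to the $\mathfrak{h}$-level identity $\mathsf{F}_{0^{-}}=\mathsf{S}\mathsf{F}_{0^{+}}+\mathsf{L}$, which holds by construction since $u|_{\mathbb{R}_+}=0$ and $u(0^{-})=1$; it is crucial here that the entries $v(t_i),u(t_i)$ are scalars, so they commute with $\mathsf{S}$ and $\mathsf{L}$ and the boundary identity propagates through the chronological product independently of the slot into which $\mathsf{F}_{0^\pm}$ is inserted. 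That $\mathcal{D}$ is a core follows by varying $(v,u,h)$: ordinary exponential vectors $e(v)$ with $v\in W^{1,2}(\mathbb{R}\setminus\{0\})$ are already dense in the graph of $d\Gamma(i\partial_{ac})$, and the $u$-correction supplies precisely the boundary flexibility required by $\mbox{Dom}(H)$; alternatively the core property can be attributed to \cite{Gre01}.

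For the rate estimate I would use the symmetric splitting afforded by $\kappa_{\pm}=1/2$ (guaranteed by the positivity of $g$) to rewrite the $n$-th Fock component of the difference as
\begin{equation*}
\int_{0}^{\infty}\rho^{(k)}(s)\bigl[\Phi_{n+1}(\cdot,s)-\Phi_{n+1}(\cdot,0^{+})\bigr]ds+\int_{-\infty}^{0}\rho^{(k)}(s)\bigl[\Phi_{n+1}(\cdot,s)-\Phi_{n+1}(\cdot,0^{-})\bigr]ds.
\end{equation*}
For $s$ in a neighbourhood of $0$ that avoids the coordinates $t_j$, the chronological slot of $\mathsf{F}_s$ in the product is constant, and one may factor $\Phi_{n+1}(\cdot,s)-\Phi_{n+1}(\cdot,0^{\pm})$ as a product of fixed $\mathsf{F}$-factors times the scalar difference $v(s)-v(0^{\pm})$, which is $O(s)$ by the $C^{1}$ regularity of $v$ on each side of zero. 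An $\mathfrak{h}$-valued version of Lemma \ref{lem convergefast}, applied with $\eta=\rho\chi_{(0,\infty)}$ and $h$ equal to the operator-product-times-scalar-difference described above, then yields a pointwise $O(1/k)$ bound.

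To conclude I would integrate the pointwise bound over $(t_1,\ldots,t_n)\in\mathbb{R}^n$ and sum over $n$ with weight $1/n!$: because pseudo-exponential vectors inherit the exponential-vector-type growth bounds, the per-level $O(1/k)$ estimates aggregate to an $O(1/k)$ bound for the total Fock norm. The main obstacle is the non-product nature of the integrand: on the thin set of $(t_1,\ldots,t_n)$ for which some $t_j$ falls within the shrinking support of $\rho^{(k)}$, the chronological slot of $\mathsf{F}_s$ jumps as $s$ varies, spoiling the clean factorisation used above. One must verify that these exceptional regions have measure $O(1/k)$ and that their aggregate contribution, even after summation over Fock levels, remains $O(1/k)$; this is where the uniform operator-norm control of the $\mathsf{F}$-factors, together with the factorial summability of pseudo-exponential components, does the decisive work.
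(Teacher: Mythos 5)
You correctly set up the $Z^+/Z^-$ decomposition and correctly identify, for the first assertion, that the boundary identity can be checked at the $\mathfrak{h}$-level and that the reference to \cite{Gre01} handles the core property. But you then introduce an obstacle in the rate estimate that does not actually exist, and your treatment of it is the weakest part of the proposal.

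The issue is the chronological ordering. You observe for the first part that $v(t_i),u(t_i)$ are scalars and so commute with $\mathsf{S},\mathsf{L}$, but you do not follow this through to the decisive structural fact: every $\mathsf{F}_t$ has the form $v(t)\,I + u(t)\,K$ with the \emph{same} fixed bounded operator $K=\mathsf{S}v(0^+)+\mathsf{L}-v(0^-)$, hence $[\mathsf{F}_t,\mathsf{F}_s]=0$ for all $t,s$. Once this is noted, the time-ordering symbol $\vec{T}$ is vacuous, $\Phi_{m+1}(t_1,\dots,t_m,s)=\mathsf{F}_s\,\Phi_m(t_1,\dots,t_m)$ for \emph{every} $(t_1,\dots,t_m)$ and $s$, and the factorisation
\begin{equation*}
\Phi_{m+1}(\cdot,s)-\Phi_{m+1}(\cdot,0^{\pm})=\bigl[\mathsf{F}_s-\mathsf{F}_{0^{\pm}}\bigr]\Phi_m(\cdot)
\end{equation*}
holds globally, with no exceptional set whatsoever. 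The ``thin set where some $t_j$ falls in the shrinking support of $\rho^{(k)}$'' problem you raise at the end is therefore illusory: the chronological slot cannot matter because the factors commute. As a result, $\|Z_m^{\pm}\|$ is literally the norm of $\Phi_m$ (or a product of uniformly bounded $\mathsf{F}$'s applied to $h$) multiplied by a scalar integral that Lemma \ref{lem convergefast} shows is $O(1/k)$, and the Fock-space summation reduces to the finiteness of $\|\Phi\|$. Your proposed workaround (estimating the measure of the exceptional region and controlling aggregate contributions after summation over levels) is not carried out and would be needlessly delicate; the paper's argument, which hinges on $[\mathsf{F}_t,\mathsf{F}_s]=0$, makes it unnecessary. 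To repair your proof, replace the ``constant chronological slot near $0$'' discussion and the final paragraph with the commutativity observation and the resulting exact factorisation.
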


\begin{proof}
The first part of this lemma is proved by Gregoratti where it is shown that $\mathcal{D}$ is dense, and is contained in $\mbox{Dom} (H) \cap \mathscr{V}_{0^\pm}$, see \cite{Gre01} Propositions 4 and 5. Note that for $\Phi =\Psi \left( \mathsf{F},h\right) $, by (4) in \cite{Gre01} we have 
\begin{eqnarray*}
a \left( t\right) \Phi &=&v \left( t\right) \Phi ,\quad t\in \left\{
0^{+}\right\} \cup \left( 0,\infty \right) , \\
a \left( 0^{-}\right) \Phi &=&\left( \mathsf{S} v \left(
0^{+}\right) +\mathsf{L} \right) \Phi .
\end{eqnarray*}
To prove the second part, we begin by setting 
\begin{eqnarray*}
Z_{m}\left( t_{1},\cdots ,t_{m}\right)
&=& \left[ A (\rho^{( k ) }) \Phi -\left( \frac{1}{2} a  ( 0^{+} ) +\frac{1}{2}a  ( 0^{-} \right) 
\Phi \right] _{m} ( t_{1},\cdots ,t_{m} ) \\
&=& \int_0^\infty \rho^{ ( k ) } ( s ) 
\left[
\Phi _{m+1} ( t_{1},\cdots ,t_{m}, s  ) 
-\Phi_{m+1} ( t_{1},\cdots ,t_{m}, 0^{+}   ) \right] ds \\
&& +\int_{-\infty }^{0}\rho ^{ ( k ) } ( s ) \left[
\Phi _{m+1} ( t_{1},\cdots ,t_{m}, s   ) -\Phi_{m+1} ( t_{1},\cdots ,t_{m}, 0^{-}   ) \right] ds \\
&\equiv & Z_{m}^{+}\left( t_{1},\cdots ,t_{m}\right) +Z_{m}^{-}\left(
t_{1},\cdots ,t_{m}\right) .
\end{eqnarray*}
We have $\left\| Z_{m}\right\| ^{2}\leq \left( \left\| Z_{m}^{+}\right\|
+\left\| Z_{m}^{-}\right\| \right) ^{2}$ but 
\begin{equation*}
Z_{m}^{+}\left( t_{1},\cdots ,t_{m}\right) =\int_{0}^{\infty }\rho
^{\left( k\right) }\left( s\right) \left[ v \left( s\right)
-v \left( 0^{+}\right) \right] ds\,\Phi _{m}\left( t_{1},\cdots
,t_{m}\right)
\end{equation*}
and this prefactor is clearly $O\left( \dfrac{1}{k}\right) $ from the argument used in Lemma \ref{lem:TKconditions}.

However, we then have 
\begin{eqnarray*}
&&Z_{m}^{-}\left( t_{1},\cdots ,t_{m}\right) \\
&=&\int_{-\infty }^{0}\rho ^{\left( k\right) }\left( s\right)  
\left[ \mathsf{F}_{t_{\sigma \left( 1\right) }}\cdots \mathsf{F}_{s }\cdots \mathsf{F}_{t_{\sigma \left( m\right) }}-\mathsf{F}_{ 0^{-}  }\mathsf{F}_{t_{\sigma \left( 1\right) }}\cdots 
\mathsf{F}_{t_{\sigma \left( m\right) }}\right] h\,ds
\end{eqnarray*}
where $\sigma $ is the chronological time ordering permutation. 

We note however that  $[ \mathsf{F}_t , \mathsf{F }_s ]=0 $ for all $t,s$, therefore we have
\begin{eqnarray*}
Z_{m}^{-}\left( t_{1},\cdots ,t_{m}\right) 
&=&\int_{-\infty }^{0}\rho ^{\left( k\right) }\left( s\right)  
\left[   \mathsf{F}_{s } -\mathsf{F}_{ 0^{-}  }\right] 
\mathsf{F}_{t_{\sigma \left( 1\right) }}\cdots 
\mathsf{F}_{t_{\sigma \left( m\right) }}h \, ds \\
&=&\int_{-\infty }^{0}\rho ^{\left( k\right) }\left( s\right)  
\left[   u (s)  - u( 0^{-} ) \right] 
\left[ \mathsf{S} v(0^+) + \mathsf{L} - v(0^-) \right] \\
&& \times 
\mathsf{F}_{t_{\sigma \left( 1\right) }}\cdots 
\mathsf{F}_{t_{\sigma \left( m\right) }}h \, ds
\end{eqnarray*}
where we used (\ref{eq:pseudovector_F}). From the argument in Lemma \ref{lem:TKconditions} again, 
we see that this is $O\left( \dfrac{1}{k}\right) $.
\end{proof}

\begin{lemma}
\label{lem:extra}
For $\Phi$ chosen as a pseudo-exponential vector, as in Lemma \ref{lem:pseudo}, we have that $\left\| A(g^{\left( k\right) })\Gamma (G^{(k)})V^{\left( k\right) }\right\| _{2}^{2}$ converges to $0$ as $k \to \infty$.
\end{lemma}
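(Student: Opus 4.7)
The plan is to exploit the explicit pseudo-exponential structure of $\Phi$ to recast $A(\rho^{(k)})V^{(k)}$ as another controllable pseudo-exponential vector, so that essentially the same estimate as in Lemma \ref{lem:pseudo} can be re-applied.

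First I would invoke Lemma \ref{lem convann} with $C = G^{(k)}$, together with the identity $A(G^{(k)\dag} g^{(k)}) = A(\rho^{(k)})$ already noted after Eq.~(\ref{eq:Hkfk}), to rewrite $A(g^{(k)}) \Gamma(G^{(k)}) = \Gamma(G^{(k)}) A(\rho^{(k)})$. Since $\Gamma(G^{(k)})$ is a contraction, this reduces the claim to showing $\|A(\rho^{(k)}) V^{(k)}\|_2 \to 0$ on the pseudo-exponential core $\mathcal{D}$.

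Next, using the commutativity $[\mathsf{F}_s, \mathsf{F}_t] = 0$ already exploited in the proof of Lemma \ref{lem:pseudo}, I would establish the key identity
\begin{equation*}
A(\rho^{(k)}) \Psi(\mathsf{F}, \xi) = \Psi(\mathsf{F}, \mathsf{F}^{(k)} \xi), \qquad \mathsf{F}^{(k)} := \int_{\mathbb{R}} \rho^{(k)}(s)\, \mathsf{F}_s\, ds, \quad \xi \in \mathfrak{h}.
\end{equation*}
Combined with the formulae $a(0^+)\Phi = \Psi(\mathsf{F}, v(0^+)h)$ and $a(0^-)\Phi = \Psi(\mathsf{F}, [\mathsf{S}v(0^+) + \mathsf{L}]h)$ from Lemma \ref{lem:pseudo}, this rewrites $V^{(k)} = \mathsf{E}_{11}\,\Psi(\mathsf{F}, Y_k h)$ with $Y_k := \mathsf{F}^{(k)} - \tfrac{1}{2}[(I+\mathsf{S})v(0^+) + \mathsf{L}]$, and a second application of the identity yields $A(\rho^{(k)}) V^{(k)} = \mathsf{E}_{11}\,\Psi(\mathsf{F}, \mathsf{F}^{(k)} Y_k h)$.

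Finally I would estimate this pseudo-exponential norm by dropping the chronological ordering (again using commutativity) and applying the triangle inequality Fock-level by Fock-level to obtain
\begin{equation*}
\|A(\rho^{(k)}) V^{(k)}\|_2 \leq \|\mathsf{E}_{11}\|\, \exp\!\left(\tfrac{1}{2}\!\int_{\mathbb{R}} \|\mathsf{F}_t\|^2\, dt\right) \|\mathsf{F}^{(k)}\|\, \|Y_k h\|_{\mathfrak{h}}.
\end{equation*}
The exponent is finite and $k$-independent; $\|\mathsf{F}^{(k)}\|$ is uniformly bounded in $k$ because $\mathsf{F}^{(k)} \to \tfrac{1}{2}[(I+\mathsf{S})v(0^+) + \mathsf{L}]$ in operator norm; and $\|Y_k h\|_{\mathfrak{h}} = O(1/k)$ follows by reading off the zeroth Fock component of the bound $\|\Psi(\mathsf{F}, Y_k h)\|_2 = O(1/k)$ already proved in Lemma \ref{lem:pseudo}. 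The net upshot is $\|A(\rho^{(k)}) V^{(k)}\|_2 = O(1/k) \to 0$, which is the conclusion. The main (and essentially only) non-routine point is the pseudo-exponential closure identity in the second step; once that is available, all that remains is uniform-in-$k$ bookkeeping of a finite collection of constants, which is straightforward since $\rho^{(k)}$ has constant $L^1$-norm and the profiles $v, u$ determining $\mathsf{F}$ are fixed.
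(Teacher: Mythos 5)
Your argument is correct and follows the same route as the paper: both start from $A(g^{(k)})\Gamma(G^{(k)}) = \Gamma(G^{(k)})A(\rho^{(k)})$ (Lemma \ref{lem convann}), discard the contraction $\Gamma(G^{(k)})$, and then exploit the commutativity $[\mathsf{F}_s,\mathsf{F}_t]=0$ together with the $O(1/k)$ decay already secured in Lemma \ref{lem:pseudo}. The only real difference is organizational: you package the closure of pseudo-exponentials under $A(\rho^{(k)})$ into the tidy identity $A(\rho^{(k)})\Psi(\mathsf{F},\xi)=\Psi(\mathsf{F},\mathsf{F}^{(k)}\xi)$ and then invoke a single global bound $\|\Psi(\mathsf{F},\xi)\|\le\|\xi\|\exp\bigl(\tfrac12\int\|\mathsf{F}_t\|^2\,dt\bigr)$, whereas the paper carries out the equivalent estimate Fock-level by Fock-level (via the $Z_m^{\pm}$ split and the uniform bound $\int\rho^{(k)}(t)\|\mathsf{F}_t\|\,dt=\int\rho(\tau)\|\mathsf{F}_{\tau/k}\|\,d\tau$); the two are interchangeable, and your abstraction is arguably the cleaner way to present it.
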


\begin{proof}
We have that
\begin{eqnarray*}
 A(g^{\left( k\right) })\Gamma (G^{(k)})V^{\left( k\right) } =
  \Gamma (G^{(k)}) A( \rho^{\left( k\right) })V^{\left( k\right) },
\end{eqnarray*}
with $\Gamma (G^{(k)})$ a contraction. The $m$th level of the Fock space component of $A( \rho^{\left( k\right) })V^{\left( k\right) }$ may be written as
\begin{eqnarray*}
E_{11} A( \rho^{\left( k\right) })\, Z_m^+
+ 
E_{11} A( \rho^{\left( k\right) })\, Z_m^- ,
\end{eqnarray*}
where we use the same conventions as in Lemma \ref{lem:pseudo}.  The first term has the explicit components
\begin{eqnarray*}
E_{11} \int dt \, \rho^{(k)} (t) \int_{0}^{\infty }\rho
^{\left( k\right) }\left( s\right) \left[ v \left( s\right)
-v \left( 0^{+}\right) \right] ds\,\Phi _{m+1}\left( t, t_{1},\cdots 
,t_{m}\right) \\
=
E_{11} \int dt \, \rho^{(k)} (t)  \mathsf{F}_t \, \int_{0}^{\infty }\rho
^{\left( k\right) }\left( s\right) \left[ v \left( s\right)
-v \left( 0^{+}\right) \right] ds\,\Phi _{m }\left(   t_{1},\cdots 
,t_{m}\right)
\end{eqnarray*}
which is norm bounded by $ \| E_{11} \| \, \int dt \, \rho^{(k)} (t)  \| \mathsf{F}_t \| \, \| Z^+_m \|$, and we note that in fact $\int dt \, \rho^{(k)} (t)  \| \mathsf{F}_t \| =\int d\tau \, \rho (\tau)  \| \mathsf{F}_{\tau / k} \|$. An equivalent bound is easily shown to hold for $E_{11} A( \rho^{\left( k\right) })\, Z_m^-$ and so by an argument similar to lemma \ref{lem:pseudo} we obtain the desired result.
\end{proof}

\subsection*{Epilogue}
After completion of this work, the authors became aware of the book by W. von Waldenfels \cite{VW} which gives a
complete resolvent analysis of the Chebotarev-Gregoratti-von Waldenfels Hamiltonian, and in the final chapter 
describes a strong resolvent limit by colored noise approximations. The convergence is comparable to the
strong uniform convergence considered here, but the approach is very different.

\begin{acknowledgement}
JG and RG are grateful to EPSRC EP/G039275/1 and EP/L006111/1 grant for support. They also wish to thank the Isaac Newton Institute for Mathematical Sciences, 
Cambridge, for support and hospitality during the programme \textit{Quantum Control Engineering} 
where work on this paper was completed. HN acknowledges
support through a research visit funded through EPSRC EP/H016708/1 and Australian Research Council grants DP0986615 and DP130104191.
\end{acknowledgement}

\bigskip

\end{document}